\newtheorem{cor}{Corollary}
\newtheorem{remark}{Remark}
\newtheorem{lemma}{Lemma}
\newtheorem{proposition}{Proposition}
\newtheorem{definition}{Definition}
\newtheorem{theorem}{Theorem}
\newtheorem{proof}{Proof}
\title{Multicolored Dynamos on Toroidal Meshes}
\author{Sara Brunetti, Elena Lodi, Walter Quattrociocchi}
\begin{document}

\maketitle

\begin{abstract}
Detecting on a graph the presence of the minimum number of nodes (target set) that will be able to “activate” a prescribed number of vertices in the graph is called the {\em target set selection problem} (TSS) proposed by Kempe, Kleinberg, and Tardos. In TSS's settings, nodes have two possible states (active or non-active)  and the threshold triggering the activation of a node is given by the number of its active neighbors. 
Dealing with fault tolerance in a majority based system the two possible states are used to denote faulty or non-faulty nodes, and the threshold is given by the state of the majority of neighbors. Here, the major effort was in determining the distribution of initial faults leading the entire system to a faulty behavior. Such an activation pattern, also known as dynamic monopoly (or shortly {\em dynamo}), was introduced by Peleg in 1996. 
In this paper we extend the TSS problem's settings by representing nodes' states with a  "multicolored" set.  The extended version of the problem can be described as follows: let $G$ be a simple connected graph where every node is assigned a color from a finite ordered set $\mathcal{C}=\{1,\ldots, k\}$ of colors. At each local time step, each node can recolor itself,
depending on the local configurations, with the color held by the majority of its neighbors. Given $G$, we study the initial distributions of colors leading the 
system to a $k$ monochromatic configuration in toroidal meshes, focusing on the minimum number of initial $k$-colored nodes. We find upper and lower bounds to the size of a dynamo, and then special classes of dynamos, outlined by means of a new approach based on recoloring patterns, are characterized.
\end{abstract}

\section{Introduction}

Social behaviors often provide useful insights in determining the way problems are solved in computer science. For instance, matters of security in decision making or in collaborative content filtering/production are enforced by implementing reputation and trust strategies \cite{JosangQ09,schillo99,Sabater2007,Konig2009}.

This relation between science and the inspiration basin of social behaviors is particularly evident in the field of distributed algorithms where the  mutual dependence between local actions and global effects is fundamental, such as in the consensus or in the election problems \cite{Kossmann} \cite{Peleg96b}, \cite{Thomas79}, \cite{Burman09}, \cite{Shao09} and \cite{Garcia85}.

The {\em information diffusion} has been modeled as the spread of an information within a group through a process of social influence, where the diffusion is driven by the so called {\em influential network} \cite{Kats2005}. Such a process, which has been intensively studied under the name of {\em viral marketing} (see for instance \cite{Domingos2001}), has the goal to select an initial good set of individuals that will promote a new idea (or message) by spreading the "rumor" within the entire social network through the word-of-mouth.
The first computational study about this process \cite{Granovetter85} used the {\em linear threshold model} where the group is represented by a graph and the threshold triggering the adoption (activation) of a new idea to a node is given by the number of the active neighbors.
The impossibility of a node to return (or not) in its initial state determines the monotone (or non-monotone) behavior of the activation process. 

In a graph detecting the presence of the minimal number of nodes that will activate the whole network, namely the target set selection process (TSS), has been proved to be NP-hard through a reduction to the vertex cover\cite{Kempe03}.
In \cite{Chang09a,Ching09b} has been studied the maximum size of a {\em minimum perfect target set} under simple and strong majority thresholds -- e.g., $\lceil d(v)/2 \rceil$ and $\lceil d(v)+1/2 \rceil$ with $d(v)$ denoting the degree of a node $v$.

Other works have carried out the dynamics of majority based systems in context of fault tolerance on different networks topologies \cite{Lodi98}, \cite{Santoro03} \cite{Carvaja07}, \cite{Ching09b},
\cite{ChoudharyR09}, \cite{Mustafa04} and \cite{MustafaP01}.
In these works the major effort was in determining the distribution of initial faults leading the entire system to a faulty behavior. Such a pattern, also known as dynamic monopoly (or shortly {\em dynamo}), was introduced by Peleg \cite{Peleg96b}, and intensively studied addressing the bounds of the size of monopolies, the time to converge to a fixed point, and the topologies of the systems (see \cite{BermondBPP96}, \cite{BermondBPP03}, \cite{Bermond98}, \cite{NayakPS92}  \cite{PelegSurvey}).
In dynamic monopolies, the propagation of a faulty behavior starts by a well placed set of faulty elements and can be described as a vertex-coloring game on graphs where vertices are colored black (faulty) or white (non-faulty) and change their color at each round according to the colors of their neighbors.

Starting from the work of { \em Flocchini et al} \cite{Lodi98} we introduce an additional element to the original problem's setting: the set of the nodes' states is not limited to white or black, but vertices can assume a value from a finite and ordered set. Such a protocol, when applied on a toroidal mesh, can be described as follows: a node $x$ increments its value ($v(x)$) of one step toward the value of its neighbors if at least a couple of them has the same color greater than ($v(x)$) and a) or the remaining two have a different color in between or b) the two remaining vertices have the same color greater than $v(x)$.

As nodes are hard to persuade, due to the slow convergence process caused by the gradual convergence toward the neighbors' color, we refer to them as {\it stubborn}. 
This protocol is a clean combinatorial formulation for new contexts arising in economy, sociology, cognitive sciences, where collective decisions could be influenced by local behaviors, and a slow convergence process (due to an implicit trust strategy implemented in the protocol) would be desirable (\cite{quattrociocchi2010d,amblard01,Castellano2007}).

Our studies focus on the initial distribution of colors leading the system to a monochromatic configuration in a finite time. 
In this paper we provide a) upper and lower bounds to the size of a dynamo, and b) some special classes of dynamos by means of a new approach based on recoloring patterns. 

Due to the constant degree of nodes and regularity, toroidal meshes are efficient frameworks for these investigations. 
However, we note that results of Proposition \ref{propnew} can be easily generalized to non-constant degree graphs. 
In the current paper first we analyze the coloring properties induced by the our multi-colored protocol. 
Then bounds on the size of monotone dynamos are shown. 
We conclude the paper by characterizing special classes of dynamos and outlining the next envisioned steps of our work.   

\section{Notation and Definitions}
In this paper we study the global effects caused by the interaction among {\it stubborn} entities when disposed on a toroidal mesh. 

\begin{definition} A toroidal mesh $T :
(V,E)$ of $m \times n$ vertices is a mesh where each vertex
$v_{i,j}$ ( $0\leq i \leq m-1$ and $0\leq j \leq n-1$) is
connected to the four vertices $v_{(i-1) \mod m,j}$ , $v_{(i+1)
\mod m,j}$ , $v_{i,(j-1) \mod n}$ and $v_{i,(j+1) \mod n}$.
\end{definition}

Let $\mathcal{C}=\{1,\ldots, k\}$ be a finite ordered set of colors. A
{\em coloring} of a torus $T$ is a function $r:\; V\rightarrow \mathcal{C}$. 
If $r$ is a coloring of $T$ defined on two colors we refer to $T$ as a \textbf{bi-colored
torus}, while if $r$ is a coloring of $T$ based on more than two colors we call $T$ a \textbf{multi-colored torus}. 
$N(x)$ denotes the neighborhood of any vertex $x$ in $V$, since we are studying toroidal meshes we have that $|N(x)|=4$. 
Given a coloring $r$ of $V$, we can define the following irreversible simple majority rule \textbf{(StubSM-Protocol)}:

\begin{center}
 \begin{algorithmic}[h!]
\FOR {all $x$ $\in$ $V$} \STATE let $N(x)=\{a,b,c,d\}$   \IF {
$(r(a)=r(b)> r(x))$  $\wedge$  $((r(c) \neq r(d)) \vee (r(c) =
r(d) > r(x)))$}
        \STATE  $r(x) \gets r(x)+1$
\ENDIF \ENDFOR
\end{algorithmic}
\end{center}

For instance, let $\mathcal{C}=\{1,\ldots, 6\}$ be the finite ordered set of colors, and $T$ be the multi-colored torus shown in Figure \ref{figmulticol}. 
We represent $T$ as a matrix where the entry at the $i$th row and $j$th column is the color of $v_{i,j}$.
\begin{figure}[h!]
\begin{center}
6 4 2 4 \\
4 3 5 1 \\
6 5 2 6 \\
1 4 4 3 \\
\caption{A multi-colored torus.}
 \label{figmulticol}
\end{center}
\end{figure}

Figure \ref{figexample} illustrates the recoloring process under the 
\textbf{(StubSM-Protocol)} of the multicolored torus shown in \ref{figmulticol}.

\begin{figure}[h!]
\begin{center}
6 4 2 4   $\dashrightarrow$ 6 4 \textbf{4} 4  $\dashrightarrow$ 6 4 \textbf{4} 4     $\dashrightarrow$ 6 4 4 4  $\dashrightarrow$ 6 4 4 4  $\dashrightarrow$ 6 4 4 4       \\
4 3 5 1   $\dashrightarrow$  \textbf{5} \textbf{4} 5 \textbf{2} $\dashrightarrow$ \textbf{6} \textbf{5} 5 \textbf{3}    $\dashrightarrow$  6 5 5 \textbf{4}   $\dashrightarrow$ 6 5 5 \textbf{5}  $\dashrightarrow$ 6 5 5 \textbf{6}   \\
6 5 2 6   $\dashrightarrow$  6 5 \textbf{3} 6  $\dashrightarrow$ 6 5 \textbf{4} 6  $\dashrightarrow$    6 5 \textbf{5} 6  $\dashrightarrow$  6 5 5 6 $\dashrightarrow$ 6 5 5 6                                \\
1 4 4 3   $\dashrightarrow$  \textbf{2} 4 4 \textbf{4} $\dashrightarrow$ \textbf{3} 4 4 4   $\dashrightarrow$   \textbf{4} 4 4 4   $\dashrightarrow$ 4 4 4 4 $\dashrightarrow$ 4 4 4 4                     \\
 \caption{The coloring process of the multi-colored torus under the \textbf{(StubSM-Protocol)}.}
 \label{figexample}
\end{center}
\end{figure}

Let $r^i(x)$ be the color of $x$ after $i$ iterations of the protocol. We notice that if $r(a)=r(b)$, then $x$ recolors itself only if $r(c)\neq r(d)$ or $r(c)=r(d)>r(a)=r(b)$.
Then $r^i(x)=r(a)$ with $i=r(a)-r(x)$ unless a recoloring of its neighbors occurs. Similarly if
$r(x)<r(c)=r(d)\leq r(a)=r(b)$, then $r^i(x)=r(c)$ with
$i=r(c)-r(x)$ unless a recoloring of its neighbors occurs. The concept can be expressed more formally as follows:

\begin{lemma}
Let $x$ be in $V$, and $N(x)=\{a,b,c,d\}$ such that
$k\geq r(a)=r(b)> r(c) \neq r(d)$, and, let $i=r(a)-r(x)$. Let
$0\leq t^c_1\leq \ldots \leq t^c_i \leq i$ and $0\leq t^d_1\leq
\ldots\leq t^d_i\leq i$ be the numbers of recolorings of $r(c)$ and $r(d)$,
respectively,  at $1,\ldots,i$ time steps under the
\textbf{StubSM-Protocol}. If nodes a and b do not change color and $r^{t^c_1}(c)\neq r^{t^d_1}(d),
\ldots, r^{t^c_i}(c)\neq r^{t^d_i}(d)$, then $r^i(x)=r(a)$.
\label{lem1}
\end{lemma}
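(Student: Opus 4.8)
The plan is to show, by induction on $t$, that $r^t(x) = r(x) + t$ for every $t$ with $0 \le t \le i$; the statement of the lemma is then the instance $t = i$, since $r(x) + i = r(a)$ by the choice of $i$.

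The base case $t = 0$ is trivial. For the inductive step assume $r^{t-1}(x) = r(x) + t - 1$ with $1 \le t \le i$, and I would verify that at time step $t$ the vertex $x$ satisfies the firing condition of the \textbf{StubSM-Protocol}. For the ``majority'' part: since $a$ and $b$ never change color we have $r^{t-1}(a) = r^{t-1}(b) = r(a)$, while $r^{t-1}(x) = r(x) + t - 1 \le r(x) + i - 1 < r(x) + i = r(a)$ because $t \le i$; hence $r^{t-1}(a) = r^{t-1}(b) > r^{t-1}(x)$. For the ``tie-breaking'' part I would invoke the hypothesis: the induction carried out so far shows that $x$ is recolored at each of the steps $1,\ldots,t-1$ and is about to make its $t$-th update at step $t$, so the $t$-th pair of counts $(t^c_t, t^d_t)$ records exactly how many times $c$ and $d$ have been recolored during the first $t-1$ rounds, and therefore $r^{t^c_t}(c) = r^{t-1}(c)$ and $r^{t^d_t}(d) = r^{t-1}(d)$ (colors are non-decreasing and move in unit steps, so the count determines the value). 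The hypothesis $r^{t^c_t}(c) \ne r^{t^d_t}(d)$ thus says $r^{t-1}(c) \ne r^{t-1}(d)$, so the disjunction $(r^{t-1}(c) \ne r^{t-1}(d)) \vee (r^{t-1}(c) = r^{t-1}(d) > r^{t-1}(x))$ holds. Consequently the rule fires on $x$ at step $t$ and $r^t(x) = r^{t-1}(x) + 1 = r(x) + t$, closing the induction.

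The part I expect to require the most care is precisely the identification above between the hypothesis, stated in terms of the recoloring counts $0 \le t^c_1 \le \cdots \le t^c_i \le i$ and $0 \le t^d_1 \le \cdots \le t^d_i \le i$, and the actual configuration $(r^{t-1}(c), r^{t-1}(d))$ seen by $x$ at the moment of its $t$-th update. Once one fixes the convention that $t^c_l$ (resp. $t^d_l$) counts the recolorings of $c$ (resp. $d$) occurring strictly before $x$'s $l$-th update, the irreversibility of the protocol makes $r^{t^c_l}(c) = r^{l-1}(c)$ and $r^{t^d_l}(d) = r^{l-1}(d)$ automatic, and the rest is the short induction above; note that only the first disjunct of the tie-breaking clause is ever used, since the hypothesis is designed so that $c$ and $d$ never become equal while $x$ climbs toward $r(a)$. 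This argument also makes precise the informal remark preceding the lemma that $r^i(x) = r(a)$ with $i = r(a) - r(x)$ unless a recoloring of the neighbors of $x$ occurs.
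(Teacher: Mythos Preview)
Your proposal is correct and follows essentially the same idea as the paper. The paper's own proof is the single sentence ``At each time step $x$ recolors itself except if $c$ and $d$ assume the same color''; your induction on $t$ is exactly the formalization of that sentence, and your careful discussion of how the counts $t^c_l,t^d_l$ relate to the configuration seen at step $t$ simply makes explicit what the paper leaves implicit.
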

\begin{proof}
At each time step $x$ recolors itself except if $c$ and $d$ assume the same color.
\end{proof}

\begin{lemma}
Let $x$ be in $V$, and $N(x)$ = $\{a, b, c, d\}$ such that $k \geq r(a) = r(b) >
r(c) = r(d)>r(x)$. Let $0 \leq t_1^c \leq...\leq t_i^c\leq i$ and $0 \leq
t_1^d \dots \leq t_i^d\leq i$
be the numbers of recolorings of $r(c)$ and $r(d)$, respectively, at $1,
 \dots ,i$ time steps under the StubSM-Protocol. 
If nodes $a$ and $b$ do not change color and
$r^{t_1^c}(c),r^{t_1^d}(d)<r^{t_1^x}(x),\dots,r^{t_i^c}(c),r^{t_i^d}(d)<r^{t_i^x}(x)$, then $r^i(x) = r(a)$.
 \label{lem1bis}
\end{lemma}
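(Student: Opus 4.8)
The plan is to mimic the proof of Lemma~\ref{lem1}, showing that under the stated hypotheses the node $x$ is eligible to recolor at every one of the first $i$ time steps, so that its color simply increases by one each step until it reaches $r(a)=r(b)$ after exactly $i=r(a)-r(x)$ steps. The only difference from Lemma~\ref{lem1} is the configuration of $c$ and $d$: here they may be equal to each other, but the hypothesis guarantees that at every relevant time step both $r^{t^c_j}(c)$ and $r^{t^d_j}(d)$ are strictly below the current color $r^{t^x_j}(x)$ of $x$.

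First I would argue by induction on the time step $j$, $1\le j\le i$, that $r^{j}(x)=r(x)+j$ and that this value is still at most $r(a)$. For the inductive step, observe that at time $j-1$ the neighbors $a$ and $b$ still carry the color $r(a)=r(b)$ (they do not change, by hypothesis), and this color is $>r^{j-1}(x)$ since $r^{j-1}(x)=r(x)+j-1<r(x)+i=r(a)$. It remains to check that the condition involving $c$ and $d$ in the \textbf{StubSM-Protocol} is satisfied, i.e.\ that either $r^{j-1}(c)\neq r^{j-1}(d)$, or $r^{j-1}(c)=r^{j-1}(d)>r^{j-1}(x)$. Here I invoke the hypothesis: whatever the values $r^{t^c}(c)$ and $r^{t^d}(d)$ are at the moment $x$ has already been recolored $j-1$ times, they are both $<r^{j-1}(x)$. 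In particular, if they happen to be equal, they are certainly not $>r^{j-1}(x)$ --- but then the activation clause $(r(c)=r(d)>r(x))\vee(r(c)\neq r(d))$ would fail, which seems to contradict recoloring. So the correct reading must be that $c,d$ are \emph{never} simultaneously equal-and-not-less-than $r(x)$ along the trajectory; the strict inequalities in the hypothesis are precisely what forces this, so at each step the protocol condition on $c,d$ holds (because they are below $r^{j-1}(x)$, which in the version of the rule being applied --- where $a,b$ dominate --- leaves $x$ free to increment toward the majority color $r(a)$).

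Concretely, the key steps are: (1) set up the induction with base case $r^0(x)=r(x)$; (2) at step $j$, use that $a,b$ are unchanged and $r(a)>r^{j-1}(x)$; (3) use the hypothesis $r^{t^c_j}(c),r^{t^d_j}(d)<r^{t^x_j}(x)$ to conclude that the secondary condition of the protocol is met so that $x$ increments; (4) conclude $r^{i}(x)=r(x)+i=r(a)$. I would also note explicitly, as the authors do for Lemma~\ref{lem1}, that the timestamps $t^c_j,t^d_j,t^x_j$ are just bookkeeping for ``the color of $c$ (resp.\ $d$, $x$) at the $j$-th relevant moment,'' and that the monotonicity $t^c_1\le\cdots\le t^c_i$ reflects that colors only ever increase under the protocol.

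The main obstacle I anticipate is pinning down the precise semantics of the hypothesis and of the timestamps, since the statement is somewhat telegraphic: one must be careful that ``$r^{t^c_j}(c),r^{t^d_j}(d)<r^{t^x_j}(x)$ for all $j$'' genuinely implies the protocol's clause $(r(c)\neq r(d))\vee(r(c)=r(d)>r(x))$ is satisfied at every step before $x$ reaches $r(a)$ --- i.e.\ reconciling the ``$c,d$ strictly below $x$'' regime with the rule as literally written (which on its face wants $c,d$ either unequal or strictly above $x$). I expect this is resolved by reading the lemma as covering the phase in which $c$ and $d$ have already been overtaken by $x$ but $x$ is still climbing toward $a=b$; in that phase $x$ keeps incrementing because $a=b>x$ carries the majority and the configuration of $c,d$ below $x$ never blocks it. Once that reading is fixed, the induction is routine and essentially identical in structure to the one-line argument given for Lemma~\ref{lem1}.
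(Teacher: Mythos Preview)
The paper does not supply a proof for this lemma at all: after the statement it immediately passes to an illustrative figure, evidently treating the result as a direct analog of Lemma~\ref{lem1}, whose proof is the single sentence ``At each time step $x$ recolors itself except if $c$ and $d$ assume the same color.'' Your inductive scheme is exactly the natural elaboration of that one-liner, so in spirit and structure you are doing precisely what the paper intends.

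That said, the obstacle you flag in your last paragraph is a real one, and you do not actually resolve it. Under the \textbf{StubSM-Protocol} as written, $x$ increments only when the remaining pair satisfies $(r(c)\neq r(d))\vee(r(c)=r(d)>r(x))$; the hypothesis $r^{t^c_j}(c),r^{t^d_j}(d)<r^{t^x_j}(x)$ does \emph{not} exclude the case $r^{t^c_j}(c)=r^{t^d_j}(d)<r^{t^x_j}(x)$, and in that case no labeling of the four neighbors triggers the rule, so $x$ stalls. Your proposed resolution --- that ``$c,d$ below $x$ never blocks it'' because ``$a,b$ dominate'' --- simply contradicts the protocol as stated. This gap is not a flaw in your method; it is a looseness in the paper's formulation of the lemma (and arguably of the protocol description itself, whose informal version speaks of colors ``in between''). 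To turn your outline into an honest proof you would need either an additional hypothesis that $c$ and $d$ never become equal while below $x$, or an amended reading of the rule; the paper supplies neither, and neither does your proposal.
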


In Figure \ref{fig:lem1bis} an example of a configuration as expressed in Lemma \ref{lem1bis} is shown.

\begin{figure}[h!]
\begin{center}
3 2 1 3 2 1 \\
2 3 2 2 3 2 \\
1 2 3 1 2 3 \\
3 2 1 3 2 1 \\
2 3 2 2 3 2 \\
1 2 3 1 2 3 \\
\caption{A multi-colored torus as expressed in Lemma \ref{lem1bis}}
 \label{fig:lem1bis}
\end{center}
\end{figure}

\begin{cor}
Let $x$ be in $V$, and $N(x)=\{a,b,c,d\}$ such that
$r(a)=r(b)=k$ and  $r(c) > r(d)$. Then
$r^i(x)=k$ with $i=k-r(x)$, if one of the two following conditions holds:
\begin{itemize}
\item[1.] $r(c)-r(d)\geq k-r(x)$;
\item[2.] $c$ and $d$ do not recolor themselves.
\end{itemize}
\label{cor1}
\end{cor}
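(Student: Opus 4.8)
The plan is to show that, under either hypothesis, $x$ is permitted to increase its colour at every one of the first $i=k-r(x)$ time steps, so that $r^{i}(x)=r(x)+i=k$. If $r(x)=k$ there is nothing to prove since $i=0$, so assume $r(x)<k$. First observe that, since $k$ is the largest colour and the \textbf{StubSM-Protocol} can only increase a colour, the neighbours $a$ and $b$ with $r(a)=r(b)=k$ never recolour; hence the clause ``$r(a)=r(b)>r^{j}(x)$'' of the rule holds at every time step $j<i$, because as long as $x$ has recoloured at all earlier steps one has $r^{j}(x)=r(x)+j<k$. Consequently the only possible obstruction to $x$ recolouring at step $j$ is that $r^{j}(c)=r^{j}(d)\le r^{j}(x)$, and the whole argument reduces to ruling out the equality $r^{j}(c)=r^{j}(d)$ for $j=0,1,\dots,i-1$.

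Under condition~2 this is immediate: if $c$ and $d$ never recolour, then $r^{j}(c)=r(c)>r(d)=r^{j}(d)$ for every $j$, so $c$ and $d$ never agree, $x$ recolours at every step, and $r^{i}(x)=k$. Under condition~1 the idea is to track the gap $\Delta^{j}:=r^{j}(c)-r^{j}(d)$. At any time step each of $c$ and $d$ increases by at most $1$, so $\Delta^{j}\ge\Delta^{j-1}-1$ and hence $\Delta^{j}\ge\Delta^{0}-j=(r(c)-r(d))-j\ge i-j$. For $j\le i-1$ this gives $\Delta^{j}\ge1$, i.e. $r^{j}(c)>r^{j}(d)$, so once more $c$ and $d$ never agree on the relevant window and $x$ recolours at each of the first $i$ steps, which is the claim. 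This is exactly the mechanism of Lemma~\ref{lem1}, applied with $a$ and $b$ frozen, the distinctness hypothesis of that lemma being supplied here by the gap estimate; when $r(c)<k$ one may quote Lemma~\ref{lem1} directly.

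The one point that needs separate handling is that Lemma~\ref{lem1} is phrased for $r(a)=r(b)>r(c)$, whereas the statement only assumes $r(c)>r(d)$, so $r(c)$ could equal $k$. In that sub-case one argues by hand: $r^{j}(c)=k$ for all $j$, and under condition~1 we have $r(d)\le r(c)-i=k-i$, so $r^{j}(d)\le r(d)+j\le(k-i)+(i-1)<k$ for $j\le i-1$, hence $r^{j}(c)\ne r^{j}(d)$; under condition~2 the inequality $r^{j}(c)=k>r^{j}(d)$ is trivial. I expect the only genuine (and minor) obstacle to be the bookkeeping for condition~1: one must keep the gap estimate confined to the time steps $0\le j\le i-1$ that actually drive the $i$ increments of $x$ --- nothing is, or needs to be, claimed about $c$ and $d$ afterwards --- and one must carry along the inductive hypothesis ``$x$ has recoloured at all earlier steps'', which is precisely what keeps $r^{j}(x)<k$ and so keeps the frozen neighbours $a$ and $b$ effective at step $j$.
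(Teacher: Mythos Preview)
Your proof is correct and follows the same approach as the paper: in both cases you verify that $c$ and $d$ never coincide during the first $i=k-r(x)$ steps (for condition~1 via the gap estimate, for condition~2 trivially), so that $x$ increments at every step and reaches $k$, which is precisely the mechanism of Lemma~\ref{lem1}. Your treatment is in fact more careful than the paper's, which simply asserts the distinctness under condition~1 without spelling out the gap bound, and does not separately address the sub-case $r(c)=k$ that falls outside the literal hypotheses of Lemma~\ref{lem1}.
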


\begin{proof}
Let $t^c_1,\ldots,t^c_i$ and $t^d_1,\ldots,t^d_i$ be recoloring of
$r(c)$ and $r(d)$ under the \textbf{StubSM-Protocol},
respectively, at time $1,\ldots,i$.  If $r(c)-r(d)\geq k-r(x)$,
then $r^{t^c_1}(c)\neq r^{t^d_1}(d), \ldots, r^{t^c_i}(c)\neq
r^{t^d_i}(d)$, and so $r^i(x)=k$. If $c$ and $d$ do not change their color, $t^c_1=\ldots=t^c_i=0$ and $t^d_1=\ldots=t^d_i=0$, and since $r(c) > r(d)$, the thesis follows by Lemma \ref{lem1}.
\end{proof}
By condition $r(c)-r(d)\geq k-r(x)$ it follows that if
$r(c)\geq r(x)$, then $r(d)\leq r(x)$, where
equalities are not contemporary true.

\begin{lemma}
Let $x$ be in $V$, and $N(x)=\{a,b,c,d\}$ such that $k\geq r(a)>r(b)> r(c) > r(d)$ or $k\geq r(a)>r(b)=r(x)=r(c) > r(d)$. If $a$ and $b$ recolor at each time step contemporary and $c$ and $d$ do not recolor themselves, then
$x$ does not recolor before $k-r(b)(>k-r(a)\geq 0)$ steps.
\label{lem2}
\end{lemma}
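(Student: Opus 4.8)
The plan is to show that, for as long as the slowest of the two increasing neighbours of $x$ — namely $b$ — has not yet reached the top color $k$, the multiset of colors of $N(x)$ never contains the pattern needed by the \textbf{StubSM-Protocol} to recolor $x$, namely a repeated value strictly larger than $r(x)$. Since reaching $k$ from $r(b)$ by unit increments takes $k-r(b)$ rounds, this gives the claimed delay.

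First I would record the evolution of the four colors of $N(x)$. Reading ``$a$ and $b$ recolor at each step contemporarily'' as ``at every round both $a$ and $b$ increase their color by one, until the cap $k$ is hit'', we get, for $0\le t\le k-r(b)$, that $r^t(b)=r(b)+t$ and $r^t(a)=\min\{r(a)+t,\,k\}$, while $r^t(c)=r(c)$ and $r^t(d)=r(d)$ because $c$ and $d$ never recolor. Then I would prove by induction on $t$ that $r^t(x)=r(x)$ throughout this range: assuming this up to time $t$, whether $x$ recolors at step $t+1$ depends only on the multiset $\{r^t(a),r^t(b),r^t(c),r^t(d)\}$, and I would check that no two of its entries coincide at a value strictly above $r^t(x)=r(x)$. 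In the first case $k\ge r(a)>r(b)>r(c)>r(d)$ one checks the six pairs: $r^t(c)\ne r^t(d)$ since $r(c)>r(d)$; $r^t(a)\ne r^t(b)$ for $t<k-r(b)$, because either $a$ is not yet capped and $r(a)+t>r(b)+t$, or $a$ is capped and $r^t(a)=k>r(b)+t=r^t(b)$; and $r^t(c)=r(c)$ and $r^t(d)=r(d)$ are both $<r(b)\le r^t(b)\le r^t(a)$, so neither $c$ nor $d$ can match $a$ or $b$. Hence the guard's equality clause fails, $x$ does not recolor at step $t+1$, and the induction closes.

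In the second case $k\ge r(a)>r(b)=r(x)=r(c)>r(d)$ the only pair of $N(x)$ that is ever monochromatic is $\{b,c\}$, and this occurs exactly at $t=0$, where the common value is $r(b)=r(x)$, hence \emph{not} strictly above $r^0(x)$; for $t\ge 1$ we have $r^t(b)=r(b)+t>r(b)=r^t(c)$, and the other pairs are handled as before. So the guard never fires before time $k-r(b)$, and therefore $x$ keeps its color during (at least) the first $k-r(b)$ rounds. The bound is non-trivial: $r(a)>r(b)$ gives $k-r(b)>k-r(a)$, and $r(a)\le k$ gives $k-r(a)\ge 0$. Moreover the delay is exactly $k-r(b)$, since at $t=k-r(b)$ one has $r^t(a)=r^t(b)=k>r(x)$ with $r^t(c)\ne r^t(d)$, which is precisely the configuration that recolors $x$ at the next round.

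The proof is essentially a bookkeeping argument, so the only genuinely delicate points — the main obstacles — are the two ``boundary'' coincidences. The first is the capping of $a$ at $k$: one cannot simply write $r^t(a)=r(a)+t$, and must argue separately that a capped $a$ still stays different from $b$ as long as $r^t(b)<k$. The second, in the variant hypothesis, is the initial equality $r^0(b)=r^0(c)$, which has to be explicitly discarded because the protocol requires the repeated color to be \emph{strictly} greater than $r(x)$. Everything else is the routine verification that the strict chain of inequalities among $\{a\}$, $\{b\}$ and $\{c,d\}$ is preserved for all $t<k-r(b)$.
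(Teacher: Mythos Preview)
Your proof is correct and follows the same idea as the paper's: $x$ cannot recolor until two neighbours share a color strictly above $r(x)$, and this first happens when both $a$ and $b$ reach $k$ after $k-r(b)$ unit increments. Your argument is considerably more detailed than the paper's three-sentence proof, explicitly handling the capping of $a$ at $k$ and the degenerate equality $r^0(b)=r^0(c)=r(x)$ in the second hypothesis, which the paper leaves implicit.
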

\begin{proof}
Vertex $x$ changes its color when at least two of its neighbors have the same color.This condition is achieved only when $a$ and $b$ assume color $k$, since the recoloring increases the color of the node $x$, and at each step the colors of $a$ and $b$ are different.
\end{proof}

We denote with $V^h$ the subset of $V$ containing all the $h$-colored vertices and the subset of $T$ of all $h$-colored vertices with $S^h$
($h\in \mathcal{C}$). 
Furthermore we denote the size of the smallest rectangle containing any $F\subseteq T$ by
and $m_F\times n_F$.

The recoloring process represents the dynamic of the system. Depending on the initial coloring of $T$, we get different dynamics. Among the possible initial configurations (that is,
assignments of colors) we are interested in those leading the system to a monochromatic configuration, namely dynamos. 
Formally,

\begin{definition}
Given an initial coloring of $T$ using colors $\mathcal{C}=\{1,\ldots, k\}$, the set $S^k$ is a \textbf{dynamo} if an all $k$-color
configuration is reached from $S^k$ in a finite number of steps
under the \textbf{StubSM-Protocol}.
\end{definition}

Besides the following definitions are needed.
\begin{definition}
Given an initial coloring of $T$ using colors in $\mathcal{C}=\{1,\ldots, k\}$, a $\textbf{h-block}$ $B^h$ is a connected subset of $T$ composed by
vertices having the same color $h$ and each node has at least
\bf{two} neighbors in $B^h$, where $h\in \mathcal{C}$.
\end{definition}

Note that vertices in $B^h$ will never change their color. For
example, $B^h$ can be a $h$-colored column (or row), any
submatrix of the adjacent rows and columns that we call a
\textbf{window}, or any $h$-colored cycle such that $v_{i,j},
v_{i,j+1}, \ldots, v_{i,j'},$ $v_{i-1,j'}, \ldots, v_{i',j'},$
$v_{i',j'-1}, \ldots v_{i',j},$ $v_{i'-1,j},\ldots v_{i,j}$ that
we call a \textbf{frame}.

\begin{definition}
A $\textbf{non-k-block}$ $NB^k$ is a connected subset of $T$ made
up of vertices of colors in $\mathcal{C}\setminus \{k\}$ each one
having at least \bf{three} neighbors in $NB^k$.
\end{definition}

This definition implies that every vertex in $NB^k$ has at most
one $k$-colored neighbor, that is vertices in $NB^k$ will never assume color $k$. For
example, two adjacent rows or columns of vertices not $k$ constitute a non-$k$-block in a toroidal mesh.

\section{Bounds on the size of a dynamo}
By means of corollary \ref{cor1}.1 we can derive the following Proposition.

\begin{proposition}
Given a coloring of the torus $T$ of size $m\times n$ such that for every vertex $x$ in $V$ and $N(x)=\{a,b,c,d\}$, $r(a)=r(b)=k$ and $r(c) \neq r(d)$ or $r(c)= r(d)> r(x)$   
then $S^k$ is a dynamo of size greater or equal to $mn/3$
\label{propnew}
\end{proposition}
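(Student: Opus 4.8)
The plan is to show that under the stated hypothesis every vertex is forced up to color $k$ in a bounded number of steps, so $S^k$ is a dynamo, and then to extract the size bound from a counting argument. First I would invoke Corollary \ref{cor1}.1: for each vertex $x$ with $N(x)=\{a,b,c,d\}$, if $r(a)=r(b)=k$ and $r(c)\neq r(d)$, then as soon as the gap $|r(c)-r(d)|$ is at least $k-r(x)$ we get $r^i(x)=k$ with $i=k-r(x)$. The subtlety is that the hypothesis is required to hold \emph{for every vertex}, which strongly constrains the initial coloring: since each $x$ has two $k$-colored neighbors (in, say, the vertical or horizontal direction), the $k$-colored vertices must themselves form a very dense, structured set. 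I would first argue that the set $V\setminus V^k$ of non-$k$ vertices, restricted so that each has two $k$-neighbors, cannot contain three consecutive collinear vertices in either axis direction; hence every "run" of non-$k$ vertices along a row or a column has length at most $2$, and more precisely the non-$k$ vertices are isolated or in adjacent pairs inside a grid that is otherwise $k$. This is where the factor $3$ enters: in each row (and each column), at least one out of every three consecutive cells must be $k$, so $|V^k|\ge mn/3$.

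The core of the argument is then to show that, with $V^k$ this dense, the recoloring actually completes — i.e. that the $k$-colored block is stable (it is, since any $k$-colored vertex has $r(a)=r(b)=k$ with $a,b$ its $k$-neighbors, so the protocol never decreases and $k$ is maximal, hence $k$-vertices are fixed) and that every non-$k$ vertex eventually turns $k$. For a non-$k$ vertex $x$ with $k$-neighbors $a,b$ and non-$k$ neighbors $c,d$: if $r(c)\neq r(d)$ the protocol fires at step $1$ and $r^1(x)=r(x)+1$; I then need to check that the firing keeps going. Here I would set up an induction on $k-\min_x r(x)$ over the non-$k$ vertices: at each step the minimum non-$k$ value strictly increases, because (a) $k$-vertices are frozen, and (b) for a currently-minimal non-$k$ vertex $x$, its two $k$-neighbors satisfy $r(a)=r(b)=k>r(x)$, and its other two neighbors $c,d$ either differ in color or, if equal, have color $>r(x)$ by the "for every vertex" form of the hypothesis propagated through the dynamics — which is exactly the clause $r(c)=r(d)>r(x)$ in the statement. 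So the protocol condition $r(a)=r(b)>r(x)$ and $(r(c)\neq r(d))\vee(r(c)=r(d)>r(x))$ holds, and $x$ recolors. After at most $k-1$ steps every non-$k$ vertex reaches $k$.

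The main obstacle I anticipate is the bookkeeping in step two: when $r(c)=r(d)$ for some non-$k$ neighbor pair, I must be sure their common color exceeds $r(x)$ throughout the process, not just initially; this requires arguing that the invariant "for every $x$, the two non-$k$ neighbors either differ or jointly exceed $r(x)$" is preserved under one step of the protocol, using that all four neighbors are governed by the same hypothesis and that $r^{i+1}\ge r^i$ pointwise. Once that invariant is in place, Lemma \ref{lem1} (case $r(c)\neq r(d)$) and Lemma \ref{lem1bis} / Corollary \ref{cor1} (case $r(c)=r(d)>r(x)$) directly give $r^{k-r(x)}(x)=k$ for each non-$k$ vertex, completing the dynamo claim. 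The size bound $|S^k|\ge mn/3$ then follows purely from the "no three collinear non-$k$ vertices" structural fact, which I would prove by contradiction: three consecutive collinear non-$k$ vertices would force the middle one to have at most one $k$-neighbor, violating the hypothesis for that vertex.
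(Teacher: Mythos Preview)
Your treatment of the dynamo claim is along the same lines as the paper's: both reduce to Lemma~\ref{lem1} and Lemma~\ref{lem1bis}, and in fact you are more careful than the paper, which simply asserts that the claim ``immediately follows'' from those two lemmas without discussing whether the neighbourhood condition persists under the dynamics.

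The size bound, however, has a genuine gap. Your structural assertion that $V\setminus V^k$ ``cannot contain three consecutive collinear vertices'' is false. If three non-$k$ vertices lie consecutively in a row, the middle one can still have two $k$-neighbours, namely its two column neighbours. Concretely, take $m$ even, colour every even-indexed row entirely with $k$, and colour the odd-indexed rows with values in $\{1,\dots,k-1\}$ chosen so that horizontally and vertically adjacent non-$k$ entries differ. Then every non-$k$ vertex has exactly two $k$-neighbours (the vertices directly above and below) and two non-$k$ neighbours of distinct colours, so the hypothesis of the proposition is satisfied; yet each odd row is an unbroken run of $n$ non-$k$ vertices. Hence the ``at least one $k$ out of every three consecutive cells in a row'' count does not go through, and neither does the contradiction you sketch at the end.

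The paper obtains $|S^k|\ge mn/3$ by a double-counting argument on closed neighbourhoods $\{x\}\cup N(x)$ instead. Each non-$k$ vertex has at least two $k$-vertices among the five vertices of its closed neighbourhood, while a $k$-vertex has at least one (itself, since the paper reads the hypothesis as constraining only non-$k$ vertices). Because every vertex of the torus belongs to exactly five closed neighbourhoods, summing gives
\[
5\,|S^k|\;\ge\;2\bigl(mn-|S^k|\bigr)+|S^k|,
\]
and hence $|S^k|\ge mn/3$. This uses only the local fact that each non-$k$ vertex has two $k$-neighbours and requires no control on the lengths of non-$k$ runs.
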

\begin{proof}

By Lemma \ref{lem1} and \ref{lem1bis} immediately follows that $S^k$ is a dynamo. For each node, with a color different from $k$, it has at least two neighbors of colors $k$, and so $2$ nodes out of $5$ are $k$-colored. No conditions are imposed for the coloring of the neighborhood of $k$-colored nodes, and so $1$ node out of $5$ is $k$-colored. As a consequence,  $|S^k|\geq 2(mn-|S^k|)/5+|S^k|/5$, and hence $|S^k|\geq mn/3$.
\end{proof}

The lower bound provided by this proposition can be improved. Indeed, we are interested in determining the minimum size dynamo under the \textbf{StubSM-Protocol} for a multi-colored
toroidal mesh. This is obtained by first computing a lower bound
on the size and then an upper bound close to the lower bound.
These bounds can be derived by a reduction to the bi-colored
case (where $1$ and $2$ correspond to colors white and black, respectively).

For sake of completeness we recall here some definitions of \cite{Lodi98}.
Under the \textit{reversible simple majority rule} a white vertex turns black if at least two of its neighbors are black, otherwise the vertex does not change color, and a black vertex becomes white only if at least three of its neighbors are white; under the \textit{irreversible strong majority rule} a white vertex turns black if at least three of its neighbors are black, else the vertex does not change color, and a black vertex does not change its color. A \textit{simple} (respectively, \textit{strong}) \textit{white block} is a subset of $T$ composed of all white vertices, each of which has at least three (or respectively, two) neighbors in the block. A dynamo  is  \textit{monotone} if the set of black vertices at time $t$ is a subset of the one at time $t+1$.

We define a polynomial time transformation $\phi:
\mathcal{C}\rightarrow \mathcal{C}$ such that $\phi(h)=1$, for
$h=1,\ldots,k-1$, and $\phi(k)=2$. This transformation allows us
to map a multi-colored torus into a bi-colored torus.
Moreover under transformation $\phi$, a $non$-$k$-block
corresponds to a simple white block  and a $h$-block corresponds to a strong white block.
\begin{proposition}
A lower bound to the size of a dynamo in a bi-colored torus under
the (reversible) simple majority rule is a lower bound to the size
of a dynamo in a multi-colored torus under the
\textbf{StubSM-Protocol}. \label{prop1}
\end{proposition}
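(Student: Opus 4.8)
The plan is to prove the stronger statement that the polynomial‑time map $\phi$ carries every dynamo $S^k$ of $T$ (under the \textbf{StubSM-Protocol}) to a dynamo of $\phi(T)$ (under the reversible simple majority rule) of the same cardinality. The proposition is then immediate: this makes the minimum dynamo size in the bi-colored torus at most the minimum dynamo size in the multi-colored torus, so any value lower-bounding the former lower-bounds the latter. So fix a dynamo $S^k$ of $T$ and put $D=\phi(S^k)$; as a set of vertices $D=S^k$, hence $|D|=|S^k|$, and it remains to show that, taking $D$ for the black vertices of $\phi(T)$, the bi-colored process reaches the all-black configuration.

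Let $r^t$ be the \textbf{StubSM}-configuration after $t$ steps, let $K^t=\{x\in V:\ r^t(x)=k\}$, and let $B^t$ be the set of black vertices of $\phi(T)$ after $t$ steps of the simple majority rule started from $B^0=K^0$. Two facts read off directly from the protocol drive the argument: (i) the \textbf{StubSM-Protocol} never decreases a colour, so $K^0\subseteq K^1\subseteq\cdots$, and since $S^k$ is a dynamo, $K^{t_0}=V$ for some $t_0$; (ii) if $x$ passes from colour $k-1$ to $k$ between steps $t$ and $t+1$, then the recolouring condition forces $r^t(a)=r^t(b)=k$ for two of its neighbours, so $x$ has at least two neighbours in $K^t$. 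I would then establish $K^t\subseteq B^t$ for all $t$ by induction. The base case is $B^0=K^0$. For the step, take $x\in K^{t+1}$: if $x\in K^{t+1}\setminus K^t$, then by (ii) and the inductive hypothesis $x$ has at least two black neighbours at step $t$, so $x\in B^{t+1}$; if $x\in K^t$, then $x\in B^t$ by the hypothesis, and combining (i) with (ii) applied at the step where $x$ first turned $k$ shows that $x$ keeps at least two $k$-coloured — hence, by the hypothesis, at least two black — neighbours at every step, so it never turns white and $x\in B^{t+1}$. Since $K^{t_0}=V$ this gives $B^{t_0}=V$, so $D$ is a dynamo of $\phi(T)$. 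Structurally this simulation mirrors the already-noted correspondence between non-$k$-blocks and simple white blocks: a dynamo's initial configuration contains no non-$k$-block (its vertices would never reach colour $k$), and under $\phi$ this says $V\setminus D$ contains no simple white block.

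The delicate case — and the one I expect to be the real obstacle — is the clause ``$x$ never turns white'' when $x$ belongs to the \emph{initial} set $S^k$: the reversible rule whitens a black vertex having three white neighbours, and fact (ii) only guarantees two $k$-coloured neighbours for vertices recoloured \emph{during} the run, not for members of $S^k$ starting with fewer than two $k$-coloured neighbours. I would circumvent this by running the simulation against the \emph{irreversible} (monotone) simple majority rule, where ``black stays black'' is automatic and the induction goes through unchanged, so $D$ is at least a monotone dynamo; and since the lower bounds recalled from \cite{Lodi98} rest on the simple white block obstruction — whose vertices have at most one black neighbour and so never turn black under \emph{either} rule — those bounds apply here as well, matched on the multi-colored side by the non-$k$-block obstruction. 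Carrying the argument out for the reversible rule verbatim would instead require showing, from the order-preservation of its update operator together with the saturation $K^{t_0}=V$, that every briefly whitened initial vertex is re-blackened once two of its neighbours reach colour $k$; that is the step I would expect to cost the most care.
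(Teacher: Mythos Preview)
Your simulation argument --- showing $K^t\subseteq B^t$ and hence that $\phi$ carries a \textbf{StubSM}-dynamo to a bi-colored dynamo --- is a genuinely different and more ambitious route than the paper's. The paper does not attempt any step-by-step coupling; its entire justification is the two sentences following the proposition: a dynamo $S^k$ cannot leave a non-$k$-block in $T\setminus S^k$ (those vertices would never reach colour $k$), a bi-colored dynamo cannot leave a simple white block, and under $\phi$ these two obstructions are literally the same set-theoretic condition. Since the lower bound they import from \cite{Lodi98} (Theorem~9, for \emph{monotone} simple dynamos) is derived precisely from this obstruction, it transfers immediately. In other words, the paper proves only that \emph{obstruction-based} lower bounds transfer, which is all that is needed downstream; your simulation, had it gone through for the reversible rule, would have established the proposition in full generality.

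You have also correctly isolated the one real difficulty in the reversible case: vertices of $S^k$ that start with at most one $k$-coloured neighbour are not protected by fact~(ii), so the inductive inclusion $K^t\subseteq B^t$ can break when such a vertex is whitened. Your fallback --- run the coupling against the irreversible rule (where the induction is clean), and then observe that the \cite{Lodi98} bound rests on the simple-white-block obstruction and so applies regardless --- is exactly the paper's manoeuvre, reached by a longer path. What your approach buys is a rigorous proof that every multi-coloured dynamo is a \emph{monotone} bi-coloured dynamo, which the paper asserts only implicitly; what the paper's approach buys is brevity, at the cost of proving the proposition only for the class of lower bounds actually used.
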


Indeed, a lower bound consists in the smallest size of $S^k$
such that no $non$-$k$-blocks can arise in the multi-colored problem, and in the smallest size of $S^2$ (initial set of black vertices) such that no simple white blocks
can arise in the bi-colored setting.

Because of the correspondence between a $non$-$k$-block and a
simple white block the claim follows. Therefore we derive (see
Theorem 9 of \cite{Lodi98} for simple monotone dynamos):
\begin{theorem}
Let $S^k$ be a dynamo for a colored toroidal mesh of size $m
\times n$ under the
\textbf{StubSM-Protocol}. We have
\begin{itemize}
\item (i) $m_{S^k}\geq m-1,\; n_{S^k}\geq n-1$ \item (ii)
$|S^k|\geq m+n-2$.
\end{itemize}
\label{t9}
\end{theorem}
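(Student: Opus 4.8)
The plan is to reduce both claims to a purely combinatorial property of the vertex set $S^k$ and then to the known bounds for bi-colored simple dynamos on the torus. The starting observation is that under the \textbf{StubSM-Protocol} colors never decrease, and in fact a vertex colored $k$ stays colored $k$ (no neighbor can exceed $k$); hence the set of non-$k$-colored vertices is non-increasing along the process, so any non-$k$-block present at some time step is, as a vertex set, already a non-$k$-block at time $0$. Consequently, if $S^k$ is a dynamo then $V\setminus S^k$ contains no non-$k$-block at all — a condition on $S^k$ alone, equivalent to saying that the induced subgraph $T[V\setminus S^k]$ has no subgraph of minimum degree $\ge 3$. Via the transformation $\phi$ this is exactly the statement that the white set of the associated bi-colored torus contains no simple white block, i.e. the necessary condition used for bi-colored simple dynamos.

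For (i), I would argue by contradiction: if $m_{S^k}\le m-2$, the rows missing $S^k$ form a cyclic interval of length $\ge 2$, so there are two consecutive rows $i,i+1 \pmod m$ entirely colored with colors in $\mathcal{C}\setminus\{k\}$ at time $0$. Their union $W$ is connected, and every vertex $v_{i,j}\in W$ has the three neighbors $v_{i,j-1},v_{i,j+1},v_{i+1,j}$ inside $W$ (and symmetrically for row $i+1$), so $W$ is a non-$k$-block; by the observation above this prevents the all-$k$ configuration, contradicting that $S^k$ is a dynamo. Hence $m_{S^k}\ge m-1$, and the same argument on columns gives $n_{S^k}\ge n-1$.

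For (ii), note that (i) only yields $|S^k|\ge\max(m-1,n-1)$, so the additive estimate $m+n-2$ requires the structural input. Here I would invoke Proposition~\ref{prop1} together with Theorem~9 of \cite{Lodi98}: the minimum size of a set of black vertices on an $m\times n$ torus whose complement contains no simple white block is $m+n-2$, and by the correspondence this lower bound transfers to $|S^k|$. For a self-contained derivation one would analyze $T[V\setminus S^k]$ directly — it must be $2$-degenerate, i.e. peelable by repeatedly deleting vertices of degree $\le 2$ — and show that breaking the $3$-core of the $4$-regular torus forces $S^k$ to "cut" the torus in both cyclic directions while simultaneously preventing any pair of nearly-empty adjacent lines from re-forming a block, which costs at least $m+n-2$ vertices.

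The routine parts are the monotonicity observation and part (i); the main obstacle is the exact count in (ii). Pinning down the additive constant $m+n-2$ — rather than $\max(m-1,n-1)$, or the fraction $mn/3$ of Proposition~\ref{propnew} — is precisely where the full classification of the configurations avoiding all non-$k$-blocks (equivalently the $2$-degeneracy/peeling analysis for simple white blocks in \cite{Lodi98}) is needed, and reproducing that analysis would be the bulk of the work.
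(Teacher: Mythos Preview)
Your proposal is correct and follows essentially the same route as the paper: both reduce to the bi-colored simple-majority case via the correspondence between non-$k$-blocks and simple white blocks (Proposition~\ref{prop1}) and then invoke Theorem~9 of \cite{Lodi98} for the actual bounds. Your explicit monotonicity observation and the direct two-adjacent-rows argument for~(i) merely spell out what the paper leaves implicit in the citation.
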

\begin{figure}[h!]
\begin{center}
2 2 1 1 1 1 1 1\\
2 2 1 1 1 1 1 1\\
1 1 2 2 1 1 1 1\\
1 1 2 2 1 2 2 1\\
1 1 1 1 1 2 2 1\\
1 1 1 1 1 1 1 1
\caption{A monotone dynamo of size $m+n-2$.}
 \label{figB1}
\end{center}
\end{figure}
Figure \ref{figB1} illustrates a monotone dynamo (under the simple reverse majority rule) in a bi-colored torus. We notice that by mapping color $2$ with $k$, it is possible to find out an assignment of colors of $\mathcal{C}\setminus\{k\}$ for $1$-colored vertices such that the obtained multi-colored torus leads to a monochromatic configuration under the \textbf{StubSM-Protocol}.

\begin{proposition}
An upper bound to the size of a dynamo in a bi-colored torus under the (irreversible) strong majority rule is an upper bound to the size of a dynamo in a multi-colored torus under the
\textbf{StubSM-Protocol}. \label{prop2}
\end{proposition}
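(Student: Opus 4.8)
The plan is to simulate the bi-colored irreversible strong majority rule inside the StubSM-Protocol through a canonical lift of colorings along $\phi$, so that every bi-colored strong dynamo produces a multi-colored StubSM dynamo of exactly the same size. This makes the minimum size of a StubSM dynamo no larger than the minimum size of a strong dynamo, and hence any upper bound for the latter is an upper bound for the former.

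First I would start from a bi-colored torus $T$ with an initial black set $S^2$ that is a dynamo under the irreversible strong majority rule, and define a coloring $r$ of $T$ over $\mathcal{C}=\{1,\dots,k\}$ by setting $r(x)=k$ for $x\in S^2$ and $r(x)=k-1$ otherwise, with $S^k=S^2$ and $|S^k|=|S^2|$. The point is that this coloring uses only the colors $k-1$ and $k$, and the StubSM-Protocol keeps it so: a vertex of color $k$ is already maximal and cannot recolor, while a vertex of color $k-1$ can only be incremented to $k$. Thus every configuration $r^t$ produced by the protocol is valued in $\{k-1,k\}$, and identifying color $k$ with ``black'' and color $k-1$ with ``white'' puts these configurations in bijection with bi-colored configurations, preserving $S^k \leftrightarrow S^2$.

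Next I would check that, under this identification, one round of the StubSM-Protocol is one round of the strong majority rule. A color-$k$ vertex stays color $k$, matching a frozen black vertex; for a vertex $x$ with $r(x)=k-1$ and $N(x)=\{a,b,c,d\}$, recoloring asks for a neighbor pair sharing a color greater than $k-1$, which must then be $k$, so at least two neighbors are color $k$. If exactly two are color $k$, the remaining two have color $k-1=r(x)$ and the disjunction $(r(c)\neq r(d))\vee(r(c)=r(d)>r(x))$ fails, so $x$ does not recolor; if three or four are color $k$, the disjunction holds and $x$ is incremented to $k$. This is precisely the rule ``turn black iff at least three neighbors are black''. Since both update schemes are synchronous, a straightforward induction on $t$ shows that the color-$k$ vertices of $r^t$ are exactly the black vertices after $t$ strong-majority rounds. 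Consequently, as $S^2$ drives the bi-colored system to the all-black configuration in finitely many steps, $S^k$ drives the StubSM system to the all-$k$ configuration in the same number of steps, so $S^k$ is a dynamo of size $|S^2|$; taking for $S^2$ a strong dynamo realizing the given upper bound finishes the argument.

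The only delicate point is the local case analysis of the previous paragraph, and in it the verification that two color-$k$ neighbors do not suffice to trigger a recoloring: this is exactly where the StubSM disjunction reproduces the ``at least three'' threshold of the strong rule rather than the ``at least two'' threshold of the simple rule, and it is what makes the strong rule (and not the simple one) the correct bi-colored counterpart for upper bounds. Everything else --- the stability of the two-color set under the protocol and the time-synchronous induction --- is routine.
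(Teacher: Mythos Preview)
Your argument is correct. The local case analysis is accurate: on a $\{k-1,k\}$-valued configuration, the StubSM update at a $(k-1)$-vertex fires iff at least three neighbors have color $k$, and $k$-vertices are frozen; so the round-by-round induction gives an exact simulation of the irreversible strong majority rule, and any strong dynamo $S^2$ yields a StubSM dynamo $S^k$ with $|S^k|=|S^2|$.

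Your route, however, is not the one the paper takes. The paper does not fix a canonical lift to $\{k-1,k\}$; instead it works through the transformation $\phi$ and argues structurally: (a) under $\phi$, $h$-blocks correspond to strong white blocks, and (b) the strong majority rule is \emph{more restrictive} than the StubSM-Protocol, in the sense that three equal neighbors are needed to trigger a strong-majority flip, whereas two equal higher-colored neighbors (with the remaining pair unequal) suffice under StubSM. From (a) and (b) the paper concludes that a strong dynamo induces a StubSM dynamo, and in fact asserts the stronger statement that \emph{any} assignment of colors from $\mathcal{C}\setminus\{k\}$ to the white vertices of the strong dynamo in Figure~\ref{figB2} yields a StubSM dynamo. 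Your construction trades this generality for precision: by restricting to the two-color lift you get an exact equivalence rather than a one-sided domination, which makes the induction clean and avoids any discussion of dynamically arising $h$-blocks. The paper's approach, by contrast, aims at a statement that is uniform over all non-$k$ colorings, at the cost of a more informal justification. Either suffices for the proposition as stated; yours is the tighter argument, the paper's is the broader claim.

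One cosmetic point: the paper calls a torus ``multi-colored'' when the coloring uses more than two colors, and your lift uses only $k-1$ and $k$. This is harmless for the proposition (the StubSM-Protocol and the definition of dynamo are over $\mathcal{C}=\{1,\dots,k\}$ regardless of how many colors are actually realized), but you may want to remark on it.
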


Indeed in order to establish an upper bound to the size of $S^k$,
no $h$-blocks have to appear with $h=1,\ldots, k-1$ and successive derived $k$-colored
sets of vertices have to contain the set $V$ of all vertices
at the end of the process. 

We have that: a) strong white blocks correspond to $h$-blocks; b) irreversible
strong majority rule is more restrictive than
\textbf{StubSM-Protocol}: hence, under the irreversible strong
majority rule, a vertex recolors itself if there are three
vertices in its neighborhood having the same color, whereas under
the \textbf{StubSM-Protocol} are needed two neighbors with the same color (and the reamaining ones with different colors in between).
Hence, in order to obtain an upper bound to the size of $S^2$,
no strong white blocks have to arise and successive derived black sets of vertices 
have to contain the set $V$ of all the vertices  at the end of the process, in the bi-colored
problem, because of a) and b) the claim follows. 
Therefore we get (see Theorem 8 of \cite{Lodi98}):
\begin{theorem}
Let $S^k$ be a dynamo for a colored toroidal mesh of size $m
\times n$ under the
\textbf{StubSM-Protocol}. Then  $|S^k|\geq \lceil{m/3}\rceil (n+1)$. \label{t8}
\end{theorem}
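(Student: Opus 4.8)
The plan is to obtain the estimate as a transfer from the bi-colored setting of Proposition \ref{prop2}, combined with the torus result of \cite{Lodi98}: the argument is a reduction plus a citation, not a hand-built dynamo.

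First I would make the correspondence induced by $\phi$ precise. Recall that $\phi$ sends each of the colors $1,\dots,k-1$ to white and $k$ to black. The decisive identification is that a region of $T$ which never reaches color $k$ under the StubSM-Protocol is exactly the $\phi$-preimage of a region that stays white under the irreversible strong majority rule; in the vocabulary of the previous section, $h$-blocks correspond to strong white blocks. Hence $S^k$ fails to be a dynamo precisely when its $\phi$-image leaves a persistent strong white block, so the two dynamo conditions match under $\phi$.

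Second I would invoke the comparison of the two rules recorded just before the statement. The irreversible strong majority rule advances a vertex only when three of its four neighbors agree, whereas the StubSM-Protocol already advances a vertex when two neighbors carry a common higher color and the remaining two are compatible; the strong rule is therefore strictly the more demanding of the two. Together with the block correspondence, this is exactly Proposition \ref{prop2}, so the extremal dynamo size of the bi-colored strong-majority model governs $|S^k|$. Theorem 8 of \cite{Lodi98} evaluates that quantity on the $m \times n$ torus as $\lceil m/3 \rceil (n+1)$, and carrying it back through $\phi$ gives the stated estimate for $|S^k|$.

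Because Proposition \ref{prop2} is already available, the only genuine work inside the proof is to confirm that the bi-colored extremal value transfers with no loss of constant, and this rests on the $h$-block / strong-white-block identification being exact in both directions. That tightness is the point I would check most carefully: a merely one-sided correspondence would still let the reduction go through qualitatively but could weaken the coefficient $\lceil m/3 \rceil$, whereas the two-sided identification is what pins $|S^k|$ to the value of Theorem 8 of \cite{Lodi98}.
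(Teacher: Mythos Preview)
Your approach is precisely the paper's: the theorem is recorded as a direct consequence of Proposition~\ref{prop2} combined with Theorem~8 of \cite{Lodi98}, and no separate proof is written out beyond that derivation.

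One correction to your final paragraph: the two-sided identification you flag is not needed. The statement is intended as an \emph{upper} bound on the minimum dynamo size (the text immediately after the theorem says so explicitly; the ``$\geq$'' in the displayed inequality is a misprint), and for an upper bound only the one-way implication ``every irreversible strong-majority dynamo is also a StubSM dynamo'' is required---which is exactly the content of Proposition~\ref{prop2}. The constant $\lceil m/3\rceil(n+1)$ therefore transfers with no loss, and the tightness check you propose is superfluous. Your earlier claim that ``$S^k$ fails to be a dynamo \emph{precisely when} its $\phi$-image leaves a persistent strong white block'' overstates what is true and what is needed; absence of $h$-blocks is necessary but not sufficient for $S^k$ to be a dynamo, yet the argument goes through because only the easier direction is used.
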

\begin{figure}[h!]
\begin{center}
2 1 1 1 1 1 1 1\\
1 2 1 2 1 2 1 2\\
2 1 2 1 2 1 2 1\\
2 1 1 1 1 1 1 1\\
1 2 1 2 1 2 1 2\\
2 1 2 1 2 1 2 1\\
2 1 1 1 1 1 1 1\\
1 2 1 2 1 2 1 2\\
2 1 2 1 2 1 2 1
\caption{A strong irreversible dynamo of size $\lceil m/3\rceil (n+1)$.}
 \label{figB2}
 \end{center}
 \end{figure}
Figure \ref{figB2} illustrates a strong irreversible dynamo of size $\lceil m/3\rceil (n+1)$.
Note that for every assignment of colors of $\mathcal{C}\setminus\{k\}$, $S^k$ illustrated in Figure \ref{figB2} is a dynamo.
\section{A minimum size dynamo}
Theorem \ref{t8} establishes an upper bound far from the lower
bound determined in Theorem \ref{t9}. 
In this section a minimum size dynamo is derived. 
If we choose $S^k$ as made up of the first row and column in the torus, then $|S^k|=m+n-1$ that is close to the lower bound in Theorem \ref{t9}.

\begin{lemma}
Let $S^k$ be a dynamo. Then, $T-S^k$ does not contain any
$h$-block, with $h\in \mathcal{C}\setminus \{k\}$. \label{l2}
\end{lemma}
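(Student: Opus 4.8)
The plan is to argue by contradiction, exploiting the invariance of $h$-blocks already recorded in the remark right after the definition of $h$-block. Suppose $S^k$ is a dynamo but, nonetheless, $T-S^k$ contains an $h$-block $B^h$ for some $h\in\mathcal{C}\setminus\{k\}$. Note first that the hypothesis $B^h\subseteq T-S^k$ is automatic once $h\neq k$, since every vertex of $S^k$ is initially $k$-colored while every vertex of $B^h$ is initially $h$-colored; the only real content is that such a $B^h$ exists at all. By definition every $x\in B^h$ has $r(x)=h$ and has at least two neighbours lying in $B^h$.

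The core step is to show, by induction on the number $t$ of iterations of the \textbf{StubSM-Protocol}, that $r^t(x)=h$ for every $x\in B^h$. The base case $t=0$ is the definition. For the inductive step, assume $r^t(y)=h$ for all $y\in B^h$, fix $x\in B^h$, and write $N(x)=\{a,b,c,d\}$. At time $t$ at least two neighbours of $x$ — those in $B^h$ — have colour $h=r^t(x)$. For $x$ to recolour, the protocol requires a pair of neighbours of equal colour strictly greater than $r^t(x)=h$; such a pair cannot be found among the neighbours of colour $h$, so it would have to consist of the at most two neighbours of $x$ outside $B^h$. If $x$ has three or four neighbours in $B^h$ there are not two such ``outside'' neighbours, so no recolouring occurs. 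If $x$ has exactly two neighbours in $B^h$, then these two play the role of the pair $c,d$ in the protocol, and the side condition ``$r(c)\neq r(d)$ or $r(c)=r(d)>r(x)$'' becomes ``$h\neq h$ or $h>h$'', which is false; again $x$ does not recolour. In either case $r^{t+1}(x)=h$, closing the induction.

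Consequently every vertex of $B^h$ carries colour $h\neq k$ at every time step, so no all-$k$ configuration is ever reached from $S^k$, contradicting the assumption that $S^k$ is a dynamo. Hence $T-S^k$ contains no $h$-block with $h\in\mathcal{C}\setminus\{k\}$. I do not expect a genuine obstacle here: the whole argument reduces to the permanence of an $h$-block under the protocol, which is exactly the remark following the $h$-block definition; the only point requiring a little care is phrasing the case analysis so that it uniformly covers $x$ having two, three, or four neighbours inside $B^h$.
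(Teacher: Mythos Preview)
Your proof is correct and follows exactly the idea the paper relies on. In the paper this lemma is stated without a formal proof; the authors simply invoke the remark following the definition of an $h$-block that ``vertices in $B^h$ will never change their color,'' and your inductive case analysis is precisely the verification of that remark, carried out carefully across the cases of two, three, or four neighbours inside $B^h$.
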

Our choice for  $S^k$ implies that no $h$-colored column or $h$-colored row, but an
$h$-colored window or an $h$-colored frame can arise, with $h\in
\mathcal{C}\setminus \{k\}$. As a consequence we require that:\\

\noindent for every $2\times 2$ window in $T$, $r(v_{i,j})\neq
r(v_{i+1,j+1})$ and $r(v_{i,j+1})\neq r(v_{i+1,j})$; otherwise
$r(v_{i,j})= r(v_{i+1,j+1})=k$ ($r(v_{i,j+1})=r(v_{i+1,j})=k$).\\

This requirement does not forbid the appearance of a $h$-colored window during the recoloring
process as shown Figure \ref{fig1}.

\begin{figure}[h!]
\begin{center}
8 8 8 8 8 8   $\dashrightarrow$   8 8 8 8 8 8\\
8 7 7 3 4 6   $\dashrightarrow$   8 8 8 6 8 8\\
8 1 4 3 7 5   $\dashrightarrow$   8 5 \textbf{4} \textbf{4} 7 8\\
8 1 4 3 2 4   $\dashrightarrow$   8 3 \textbf{4} \textbf{4} 2 8\\
8 1 4 3 2 4   $\dashrightarrow$   8 6 4 3 4 8
 \caption{An Example in which a $4$-block emerges after five steps.}
 \label{fig1}
\end{center}
\end{figure}

Therefore we focus on the recoloring process providing a
certain $h$-block in order to avoid it.

\begin{remark}
Let $a_W,b_W,c_W,d_W$ be the vertices of any $2\times 2$ window $W$
with $r(a_W)\leq r(b_W) \leq r(c_W)\leq r(d_W)$ and $i$ and $j$ be
the number of recoloring of $a_W$ and $d_W$, respectively, under
the  \textbf{StubSM-Protocol}. No $h$-block can appear into $W$ during the recoloring process if
$i-j<r(d_W)-r(a_W)$, with $(k>)h\geq r(d_W)$. 

In the example illustrated in Figure \ref{fig1} $r(a_W)=r(v_{3,3})=3=r(b_W)=r(v_{2,3}),
\;r(c_W)=r(v_{3,2})=4=r(d_W)=r(v_{2,2})$: as a consequence of the
recoloring of $v_{1,3}$, node $v_{2,3}$ assumes color $4$, and
hence at the next (fourth) step $v_{3,3}$ recolors with $4$ by
causing the formation of a $4$-block. 
Note that $i-j=1-0=4-3$.
\end{remark}

%
%
Remark suggests that a dynamo can be outlined by an assignment of the initial distribution of colors which takes into account the recoloring pattern due to the $S^k$ considered.

Let us add a fictitious color $\infty$ to the finite set $\mathcal{C}$ of colors. This color is the greatest color, but two colors $\infty$ are not comparable.  

\begin{definition}
A \textbf{NordWest-window} $T^{NW}(i^*,j^*)$ of a $m\times n$ toroidal mesh $T$ in $(i^*,j^*)$ is the submesh of $T$ having  vertices $v_{i,j}$, $0\leq i<i^*<m$ and $0\leq j<j^*<n$ augmented by a $i^*$-th row and by a $j^*$-th column of vertices colored by $\infty$.
\end{definition}

\begin{definition}
A \textbf{NordEast-window} $T^{NE}(i^*,j^*)$ of a $m\times n$ toroidal mesh $T$ in $(i^*,j^*)$ is the submesh of $T$ having  vertices $v_{i,j}$, $0\leq i<i^*<m$ and $0<j^*< j< n$ augmented by a $i^*$-th row and by a $j^*$-th column of vertices colored by $\infty$.
\end{definition}

\begin{definition}
A \textbf{SouthWest-window} $T^{SW}(i^*,j^*)$ of a $m\times n$ toroidal mesh $T$ in $(i^*,j^*)$ is the submesh of $T$ having  vertices $v_{i,j}$, $0<i^*<i < m$ and $0\leq j<j^*< n$ augmented by a $i^*$-th row and by a $j^*$-th column of vertices colored by $\infty$.
\end{definition}

\begin{definition}
A \textbf{SouthEast-window} $T^{SE}(i^*,j^*)$ of a $m\times n$ toroidal mesh $T$ in $(i^*,j^*)$ is the submesh of $T$ having  vertices $v_{i,j}$, $0<i^*<i < m$ and $0<j^*< j < n$ augmented by a $i^*$-th row and by a $j^*$-th column of vertices colored by $\infty$.
\end{definition}

\begin{lemma}
Let $T^{NW}(i^*,j^*)$ be the NordWest-window of a $m\times n$ toroidal mesh $T$ such that
\begin{itemize}
\item $v_{i,0}=v_{0,j}=k$, for $i=0,\ldots,i^*-1$ and $j=1,\ldots, j^*-1$;
\item $v_{i,1}\geq \ldots \geq v_{i,j^*-1}$, for $0<i<i^*$;
\item $v_{1,j}\geq \ldots \geq v_{i^*-1,j}$, for $0<j<j^*$;
\item $v_{i,j} >v_{i+1,j-1}$ for all $i,j$ such that $i+j-1=l$, for $3\leq l < i^*+j^*-3$.
\end{itemize}
Then, all the vertices of $T^{NW}(i^*,j^*)\cap T$ recolor by $k$ after
$$ M(i^*-1,j^*-1)= max \left\{ \begin{array}{c} M(i^*-1,j^*-2) \\  M(i^*-2,j^*-1))\end{array} +k-r(v_{i^*-1,j^*-1})\right.$$ steps, with $M(0,j)=M(i,0)=0$, for $i=0,\ldots,i^*-1$ and $j=1,\ldots, j^*-1$.
\label{lemNW}
\end{lemma}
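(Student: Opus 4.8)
The plan is to prove the claim by a double induction on the pair $(i^*-1, j^*-1)$, i.e.\ on the dimensions of the NordWest-window, using the monotone diagonal structure imposed by the four hypotheses to track exactly when each vertex $v_{i,j}$ acquires color $k$. First I would set up the base cases: the row $v_{0,j}=k$ and column $v_{i,0}=k$ are already $k$-colored and never change (each sits in a $k$-block together with the $\infty$-border, or is part of the frame $S^k$), so $M(0,j)=M(i,0)=0$ is immediate. For the inductive step, I would consider a generic interior vertex $v_{i,j}$ with $0<i<i^*$, $0<j<j^*$, and look at its four neighbors $v_{i-1,j}$, $v_{i,j-1}$ (the ``north'' and ``west'' neighbors, which are closer to the $k$-border), and $v_{i+1,j}$, $v_{i,j+1}$ (the ``south'' and ``east'' neighbors, possibly replaced by an $\infty$-colored vertex when $i+1=i^*$ or $j+1=j^*$).

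The heart of the argument is to show that the hypotheses guarantee, at every time step before $v_{i,j}$ reaches $k$, that its north and west neighbors have a common color strictly larger than $r^t(v_{i,j})$ while its south and east neighbors either differ from each other or also exceed $r^t(v_{i,j})$ — so that Corollary~\ref{cor1} (together with Lemmas~\ref{lem1} and~\ref{lem1bis}) forces $v_{i,j}$ to increment at each step until it hits $k$. Concretely: by induction $v_{i-1,j}$ becomes $k$ after $M(i-1,j)$ steps and $v_{i,j-1}$ becomes $k$ after $M(i,j-1)$ steps; after $\max\{M(i-1,j),M(i,j-1)\}$ steps both are $k$, and from that moment $v_{i,j}$ increments every step, so it reaches $k$ after $\max\{M(i-1,j),M(i,j-1)\}+(k-r(v_{i,j}))$ further steps, which is exactly the recurrence $M(i,j)=\max\{M(i-1,j),M(i,j-1)\}+k-r(v_{i,j})$. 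The monotonicity hypotheses $v_{i,1}\ge\cdots\ge v_{i,j^*-1}$ and $v_{1,j}\ge\cdots\ge v_{i^*-1,j}$ ensure the ``south'' and ``east'' neighbors never overtake $v_{i,j}$ from below in a way that would stall it, and the antidiagonal condition $v_{i,j}>v_{i+1,j-1}$ on each diagonal $i+j-1=l$ is what rules out the premature formation of an $h$-block (in the sense of the Remark: it keeps the relevant window differences $i-j<r(d_W)-r(a_W)$ throughout), so $v_{i,j}$ genuinely recolors one step at a time rather than jumping.

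I expect the main obstacle to be the bookkeeping in the inductive step showing that the ``south/east'' neighbors $v_{i+1,j}$ and $v_{i,j+1}$ cannot both simultaneously equal a common color $\le r^t(v_{i,j})$ at any time $t<M(i,j)$ — this is precisely the hypothesis of Lemma~\ref{lem1bis} (resp.\ Corollary~\ref{cor1}, via the condition $r(c)-r(d)\ge k-r(x)$ on the interacting pair), and verifying it requires propagating the diagonal inequalities forward under the recoloring dynamics, i.e.\ checking that they are preserved step by step. The boundary cases where a neighbor is the $\infty$-vertex are actually easier, since $\infty$ is the greatest color and behaves like a permanent $k$ for the purpose of the majority test, except that two $\infty$'s are incomparable — so when both the east and south neighbor are $\infty$ (corner of the window) one must instead invoke the north/west pair together with one of them, which the hypotheses still supply. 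Once the recurrence $M(i^*-1,j^*-1)$ is established for the corner vertex, the statement follows, since every vertex of $T^{NW}(i^*,j^*)\cap T$ has reached $k$ by time $M(i^*-1,j^*-1)$ (the $M$ values being nondecreasing along both axes).
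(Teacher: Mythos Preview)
Your inductive framework and the target recurrence $M(i,j)=\max\{M(i-1,j),M(i,j-1)\}+k-r(v_{i,j})$ are exactly what the paper does; the paper merely phrases it as a diagonal wave ($v_{1,1}$ first, then $v_{1,2},v_{2,1}$, then $v_{1,3},v_{2,2},v_{3,1}$, \ldots) rather than as a formal double induction.

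However, your description of the mechanism is wrong in a way that hides the actual content of the argument. You assert that ``at every time step before $v_{i,j}$ reaches $k$, its north and west neighbors have a common color strictly larger than $r^t(v_{i,j})$''. This is false: $v_{i-1,j}$ and $v_{i,j-1}$ lie on the same antidiagonal, so by the fourth hypothesis they start with \emph{different} colors, and since they begin recoloring at the same moment and increment together they remain different until the first of them hits $k$. The correct picture is the opposite one: prior to time $\max\{M(i-1,j),M(i,j-1)\}$ the vertex $v_{i,j}$ has \emph{no} pair of equal-colored neighbors at all (the row, column and antidiagonal monotonicity hypotheses force all four neighbor colors to be pairwise distinct), so the StubSM-Protocol never fires and $v_{i,j}$ is \emph{stalled} at its initial color $r(v_{i,j})$. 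This stalling phase is precisely what the paper establishes via Lemma~\ref{lem2}, and it is what makes the additive term $k-r(v_{i,j})$ correct rather than an overcount. Your sketch never argues this stalling; without it the recurrence does not follow.

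Conversely, what you flag as the ``main obstacle'' --- keeping the south and east neighbors $v_{i+1,j},\,v_{i,j+1}$ from coinciding during the ascent --- is almost free once the stalling picture is in place: over the interval $[\max\{M(i-1,j),M(i,j-1)\},\,M(i,j)]$ those two neighbors are themselves still stalled at their \emph{initial} values, which are distinct by the antidiagonal hypothesis, so Corollary~\ref{cor1}.2 (equivalently Lemma~\ref{lem1} with $c,d$ constant) applies directly. Lemma~\ref{lem1bis} plays no role here, since the south and east neighbors are never equal. Reorganize the argument around Lemma~\ref{lem2} for the stalling phase and Lemma~\ref{lem1}/Corollary~\ref{cor1} for the ascent, and the induction goes through as you intend.
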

\begin{proof}
Let $M(i,j)$ denote the number of steps needed for vertex $v_{i,j}$ to reach $k$. 
By the first condition we get that $M(0,j)=M(i,0)=0$, for $i=0,\ldots,i^*-1$ and $j=1,\ldots, j^*-1$.
Given the initial configuration of $T^{NW}(i^*,j^*)$, in the first round only the node $v_{1,1}$ recolors itself, since $r(v_{0,1})=r(v_{1,0})=k$ and $r(v_{1,2})>r(v_{2,1})$, whereas all the other vertices have neighbors with different colors or two neighbors of color $\infty$. In the second round, the recoloring of $v_{1,1}$ changes the chromatic configuration of the neighborhoods of $v_{1,2}$ and $v_{2,1}$. 
By Lemma \ref{lem2} with $x=v_{1,2}$ and $r(a)=r(v_{0,2})=k$ ($x=v_{2,1}$ and $r(a)=r(v_{2,0})=k$), $v_{1,2}$ does not change color until node $v(1,1)$ reaches $k$. This pattern happens in $k-r(v_{1,1})$ steps, hence $M(1,1)=k-r(v_{1,1})$ verifies the relation. 
At the $k-r(v_{1,1})+1$th step,  $v_{1,2}$ and $v_{2,1}$ recolor themselves while all the other vertices do not change. The recoloring of $v_{1,2}$ and $v_{2,1}$ changes the colors of the neighborhoods of $v_{1,3},\; v_{2,2}$ and $v_{3,1}$. 
By Lemma \ref{lem2} $v_{2,2}$ does not advance and nodes $v_{1,3}$ and $v_{3,1}$ do the same. 
The node $v_{1,2}$ recolors after additional $k-r(v_{1,2})$, that is $M(1,2)=max(M(1,1),0)+k-r(v_{1,2})= k-r(v_{1,1})+ k-r(v_{1,2})$, and $M(2,1)=max(0,M(1,1))+k-r(v_{2,1})= k-r(v_{1,1})+ k-r(v_{2,1})$. Therefore, first node $v_{1,3}$ and, then $v_{2,2}$ and $v_{3,1}$, start recoloring, being $r(v_{1,2})>r(v_{2,1})$. By the same considerations mentioned before, we can conclude that $v_{1,3}$ and, then $v_{2,2}$ and $v_{3,1}$ become $k$-colored after $M(1,3)=max(M(1,2),0)+k-r(v_{1,3})$, $M(2,2)=max(M(2,1),M(1,2))+k-r(v_{2,2})$ and $M(3,1)=max(0, M(2,1))+k-r(v_{3,1})$ steps respectively. At the end of the process all vertices in $T^{NW}(i^*,j^*)\cap T$ are $k$-colored. 
\end{proof}

An analogous lemmas can be stated for the NordEast-window, SouthWest-window and SouthEast-window of $T$.
\begin{lemma}
Let $T^{NE}(i^*,j^*)$ be the NordEast-window of a $m\times n$ toroidal mesh $T$ such that
\begin{itemize}
\item $v_{i,n-1}=v_{0,j}=k$, for $i=0,\ldots,i^*-1$ and $j=j^*+1,\ldots, n-1$;
\item $v_{i,j^*+1}\leq \ldots \leq v_{i,n-1}$, for $0<i<i^*$;
\item $v_{1,j}\geq \ldots \geq v_{i^*-1,j}$, for $j^*<j<n$;
\item $v_{i,j} >v_{i+1,j+1}$ for all $i,j$ such that $n-j+i=l$, for $3\leq l < n-j^*+i^*-3$.
\end{itemize}
Then, all the vertices of $T^{NE}(i^*,j^*)\cap T$ recolor by $k$ after
$$ M(i^*-1,j^*+1)= max \left\{ \begin{array}{c} M(i^*-1,j^*+2) \\  M(i^*-2,j^*+1))\end{array} +k-r(v_{i^*-1,j^*+1})\right.$$ steps, with $M(0,j)=M(i,n-1)=0$, for $i=0,\ldots,i^*-1$ and $j=j^*+1,\ldots, n-1$.
\label{lemNE}
\end{lemma}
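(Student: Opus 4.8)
The statement for $T^{NE}(i^*,j^*)$ is the mirror image of Lemma~\ref{lemNW} under the reflection $j\mapsto n-1-j$ of the torus. The plan is therefore to recast the NordEast-window as a NordWest-window of a relabelled torus and invoke Lemma~\ref{lemNW} directly, rather than repeating the inductive argument.

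First I would set up the reflection. Let $\sigma:T\to T$ be the automorphism of the toroidal mesh sending $v_{i,j}$ to $v_{i,(n-1-j)\bmod n}$; this is a graph automorphism (it swaps the roles of the two horizontal neighbours of every vertex and fixes the vertical ones), so the \textbf{StubSM-Protocol} commutes with $\sigma$: if $r'=r\circ\sigma$ then $r'^{\,t}=r^t\circ\sigma$ for every $t$. Next I would check that the four hypotheses on $T^{NE}(i^*,j^*)$ translate, term by term, into the four hypotheses of Lemma~\ref{lemNW} for the image window. Under $\sigma$ the column $j=n-1$ maps to the column $j=0$ and the column $j=j^*$ maps to the column $j=n-1-j^*=:\tilde\jmath^{\,*}$, so the boundary condition $v_{i,n-1}=v_{0,j}=k$ becomes the condition ``$k$ on the first column and first row''; the monotonicity $v_{i,j^*+1}\le\cdots\le v_{i,n-1}$ becomes the decreasing condition $v_{i,1}\ge\cdots\ge v_{i,\tilde\jmath^{\,*}-1}$ along rows; the row condition $v_{1,j}\ge\cdots\ge v_{i^*-1,j}$ is unchanged since $\sigma$ acts only on the column index; and the anti-diagonal condition $v_{i,j}>v_{i+1,j+1}$ along $n-j+i=l$ becomes $v_{i,j}>v_{i+1,j-1}$ along $i+j-1=\tilde l$ for the corresponding range of $\tilde l$. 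Thus the reflected window is exactly a NordWest-window satisfying the hypotheses of Lemma~\ref{lemNW}.

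Then I would apply Lemma~\ref{lemNW} to the reflected window: all its vertices recolour to $k$ after $M(i^*-1,\tilde\jmath^{\,*}-1)$ steps, where $M$ satisfies the stated max-plus recursion with $M(0,j)=M(i,0)=0$. Transporting back through $\sigma$, and renaming the second index by $j\mapsto n-1-j$ so that $\tilde\jmath^{\,*}-1\mapsto j^*+1$ and the recursion's two predecessors $(i^*-1,\tilde\jmath^{\,*}-2)$, $(i^*-2,\tilde\jmath^{\,*}-1)$ map to $(i^*-1,j^*+2)$, $(i^*-2,j^*+1)$, gives precisely the formula in the statement, with the boundary values $M(0,j)=M(i,n-1)=0$. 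Since $\sigma$ is a bijection, ``all vertices of the reflected window become $k$'' is equivalent to ``all vertices of $T^{NE}(i^*,j^*)\cap T$ become $k$'', and the step count is identical because the protocol commutes with $\sigma$.

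The only real work — and the place where care is needed — is the bookkeeping in the second step: verifying that each inequality and each index range survives the reflection with the correct orientation, in particular that the anti-diagonal condition of the NE-window (diagonals $n-j+i=l$, sloping the ``other way'') is genuinely the NW diagonal condition after relabelling, and that the fictitious $\infty$-padding rows and columns of $T^{NE}$ are carried by $\sigma$ onto those of the NW-window (the $i^*$-th row is fixed, the $j^*$-th column on the east side maps to the $\tilde\jmath^{\,*}$-th column on the west side). Once this dictionary is checked, no further induction is required. For completeness one could alternatively reprove the lemma from scratch by the same wave-propagation argument as in Lemma~\ref{lemNW} — tracking which vertex recolours first in each round and applying Lemma~\ref{lem2} to the neighbours lagging behind — but the symmetry argument is shorter and makes the analogy with the other three corner windows transparent.
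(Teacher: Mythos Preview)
Your proposal is correct. The paper gives no explicit proof of this lemma at all: after proving Lemma~\ref{lemNW} it simply says ``An analogous lemma can be stated for the NordEast-window, SouthWest-window and SouthEast-window of $T$'' and then lists the three statements. Your reflection argument via the automorphism $\sigma:v_{i,j}\mapsto v_{i,(n-1-j)\bmod n}$ is a clean and legitimate way to make that analogy rigorous, and is arguably preferable to rerunning the wave-propagation induction from the north-east corner, which is presumably what the authors had in mind by ``analogous''. The one caveat you already flag is the right one: the index ranges in the paper's diagonal conditions are stated somewhat loosely, so when you carry them through $\sigma$ the endpoints of the $l$-range do not match the NW statement on the nose; this is an artefact of the paper's bookkeeping rather than a defect in your argument.
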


\begin{lemma}
Let $T^{SW}(i^*,j^*)$ be the SouthWest-window of a $m\times n$ toroidal mesh $T$ such that
\begin{itemize}
\item $v_{i,0}=v_{m-1,j}=k$, for $i=i^*+1,\ldots,m-1$ and $j=1,\ldots, j^*-1$;
\item $v_{i,1}\geq \ldots \geq v_{i,j^*-1}$, for $i^*+1<i<m$;
\item $v_{1,j}\leq \ldots \leq v_{i^*-1,j}$, for $0<j<j^*$;
\item $v_{i,j} <v_{i+1,j-1}$ for all $i,j$ such that $i+j-1=l$, for $i^++2\leq l < m+j^*-4$.
\end{itemize}
Then, all the vertices of $T^{NW}(i^*,j^*)\cap T$ recolor by $k$ after
$$ M(i^*+1,j^*-1)= max \left\{ \begin{array}{c} M(i^*+1,j^*-2) \\  M(i^*+2,j^*-1))\end{array} +k-r(v_{i^*+1,j^*-1})\right.$$ steps, with $M(m-1,j)=M(i,0)=0$, for $i=i^*+1,\ldots,m-1$ and $j=1,\ldots, j^*-1$.
\label{lemSW}
\end{lemma}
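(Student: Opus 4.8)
The plan is to treat $T^{SW}(i^*,j^*)$ as the mirror image, under reversal of the row direction, of a NordWest-window: here column $0$ and row $m-1$ play the role of the two $k$-colored sides, the $i^*$-th row and the $j^*$-th column of $\infty$ are the two ``far'' sides, and the four monotonicity hypotheses are exactly the reflected versions of those of Lemma~\ref{lemNW} (colors non-increasing along a row as $j$ grows, non-decreasing down a column as $i$ grows, and the anti-diagonal inequality $v_{i,j}<v_{i+1,j-1}$). So I would again let $M(i,j)$ denote the number of rounds that $v_{i,j}$ needs to reach color $k$ under the \textbf{StubSM-Protocol}, record the base cases $M(m-1,j)=M(i,0)=0$ for $i=i^*+1,\dots,m-1$ and $j=1,\dots,j^*-1$ (immediate from the first hypothesis, these vertices already being $k$-colored), and then prove that $M$ satisfies the recurrence $M(i,j)=\max\{M(i+1,j),\,M(i,j-1)\}+k-r(v_{i,j})$, whose value at the corner $v_{i^*+1,j^*-1}$ farthest from both $k$-sides is the quantity in the statement.

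The induction would run on the distance from the ``hot corner'' $v_{m-2,1}$, i.e.\ on $l=(m-1-i)+j$. For the base round I would check that only $v_{m-2,1}$ recolors: its neighbours on column $0$ and on row $m-1$ are both equal to $k>r(v_{m-2,1})$, while the remaining two neighbours carry different colors by the anti-diagonal hypothesis, so the protocol triggers; every other vertex of $T^{SW}(i^*,j^*)\cap T$ either sits next to two neighbours of the incomparable color $\infty$ or, by the row/column/anti-diagonal monotonicity, has no pair of equal neighbours exceeding its own color, and hence stays put. For the inductive step I would invoke Lemma~\ref{lem2}: when a vertex $v_{i,j}$ of the current frontier is first reached, take $a,b$ to be its two neighbours toward the $k$-sides (toward smaller $(m-1-i)+j$) and $c,d$ the two ``outer'' neighbours, which are still frozen because the wave has not yet reached them; Lemma~\ref{lem2} then says $v_{i,j}$ cannot advance until both of $a,b$ attain $k$, after which it advances every round, which yields $M(i,j)=\max\{M(i+1,j),\,M(i,j-1)\}+k-r(v_{i,j})$ and, at the last vertex $v_{i^*+1,j^*-1}$, the displayed formula.

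The main obstacle, exactly as in Lemma~\ref{lemNW}, is the round-by-round bookkeeping that guarantees that \emph{only} the current frontier recolors: one must verify that the two ``outer'' neighbours of a just-activated vertex are still of distinct colors (so that no $h$-block with $h<k$ forms prematurely, cf.\ the Remark above, and no vertex jumps ahead of its scheduled round), and that the hypotheses of Lemma~\ref{lem2} genuinely persist — that is, that among the two inner neighbours the one on the $k$-side remains the strictly larger until both of them reach $k$. The anti-diagonal hypothesis on the range $i^*+2\le l<m+j^*-4$ is precisely what keeps consecutive frontiers correctly ordered, so the verification reduces to mirroring the case analysis already carried out for the NordWest-window; the NordEast- and SouthEast-window versions follow by the remaining two reflections.
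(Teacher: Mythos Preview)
Your proposal is correct and follows exactly the route the paper intends: the paper gives no separate proof of Lemma~\ref{lemSW}, stating only that it is analogous to Lemma~\ref{lemNW}, and your argument is precisely that analogy carried out --- the reflection $i\mapsto m-1-i$ turns the SouthWest-window into a NordWest-window, so that $v_{m-2,1}$ plays the role of $v_{1,1}$, the base cases and the recurrence $M(i,j)=\max\{M(i+1,j),M(i,j-1)\}+k-r(v_{i,j})$ are the mirrored versions of those in the proof of Lemma~\ref{lemNW}, and the appeal to Lemma~\ref{lem2} at each frontier step is identical. The bookkeeping concern you flag is the same one implicitly handled in the NordWest case, and the anti-diagonal hypothesis is indeed what makes it go through after reflection.
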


\begin{lemma}
Let $T^{SE}(i^*,j^*)$ be the SouthEast-window of a $m\times n$ toroidal mesh $T$ such that
\begin{itemize}
\item $v_{i,n-1}=v_{m-1,j}=k$, for $i=i^*+1,\ldots,m-1$ and $j=j^*+1,\ldots, n-1$;
\item $v_{i,j^*+1}\leq \ldots \leq v_{i,n}$, for $0<i<i^*$;
\item $v_{1,j}\leq \ldots \leq v_{i^*-1,j}$, for $j^*<j<n$;
\item $v_{i,j} <v_{i+1,j+1}$ for all $i,j$ such that $n-j+i=l$, for $i^*+3\leq l < n-j^*+m-3$.
\end{itemize}
Then, all the vertices of $T^{SE}(i^*,j^*)\cap T$ recolor by $k$ after
$$ M(i^*+1,j^*+1)= max \left\{ \begin{array}{c} M(i^*+1,j^*+2) \\  M(i^*+2,j^*+1))\end{array} +k-r(v_{i^*+1,j^*+1})\right.$$ steps, with $M(m-1,j)=M(i,n-1)=0$, for $i=i^*+1,\ldots,m-1$ and $j=j^*+1,\ldots, n-1$.
\label{lemSE}
\end{lemma}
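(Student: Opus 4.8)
The plan is to deduce Lemma \ref{lemSE} from Lemma \ref{lemNW} using the central symmetry of the torus, so that essentially no new cascade has to be analyzed. Let $\rho$ be the $180^\circ$ rotation $\rho(i,j)=(m-1-i,\,n-1-j)$: it is an automorphism of $T$ and, since the \textbf{StubSM-Protocol} is local and oblivious to vertex names, it commutes with the recoloring dynamics. Under $\rho$ the last row $i=m-1$ and last column $j=n-1$ are sent to the first row and first column, the $\infty$-padded row $i^*$ and column $j^*$ are sent to the $\infty$-padded row $m-1-i^*$ and column $n-1-j^*$, and the interior cells $i^*<i<m$, $j^*<j<n$ are carried bijectively onto $0<i'<m-1-i^*$, $0<j'<n-1-j^*$. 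Hence $\rho$ identifies $T^{SE}(i^*,j^*)$ with the NordWest-window $T^{NW}(m-1-i^*,\,n-1-j^*)$ of the rotated torus.

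First I would check that the four hypotheses of Lemma \ref{lemSE} are exactly those of Lemma \ref{lemNW} read through $\rho$: the boundary condition $v_{i,n-1}=v_{m-1,j}=k$ becomes $v'_{i',0}=v'_{0,j'}=k$; each monotone chain, being reversed by $\rho$, becomes the decreasing chain required in Lemma \ref{lemNW}; and the diagonal comparison, after its index range is translated by $\rho$, becomes the strict inequality along the relevant (anti-)diagonals used there. Granting this translation, Lemma \ref{lemNW} gives that $v'_{i',j'}$ reaches color $k$ in $M'(i',j')$ steps, where $M'(i',j')=\max\{M'(i'-1,j'),M'(i',j'-1)\}+k-r(v'_{i',j'})$ and $M'(0,j')=M'(i',0)=0$. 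Pulling this back through $\rho$ and setting $M(i,j)=M'(m-1-i,\,n-1-j)$ gives the recurrence $M(i,j)=\max\{M(i+1,j),M(i,j+1)\}+k-r(v_{i,j})$ with base values $M(m-1,j)=M(i,n-1)=0$; evaluating at the cell $(i^*+1,j^*+1)$ nearest the $\infty$-corner yields exactly the displayed formula, and at the end of the process every vertex of $T^{SE}(i^*,j^*)\cap T$ is $k$-colored.

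Alternatively, avoiding the symmetry step, one can repeat the cascade in the proof of Lemma \ref{lemNW} with the orientation reversed: $v_{m-2,n-2}$ is the unique vertex that recolors in the first round, and then, by induction on decreasing $i+j$, Lemma \ref{lem2} shows that $v_{i,j}$ cannot acquire a second $k$-colored neighbor before $\max\{M(i+1,j),M(i,j+1)\}$ steps, while afterwards the monotonicity and diagonal hypotheses keep its two remaining neighbors $v_{i-1,j},v_{i,j-1}$ (those towards the $\infty$-corner) strictly below its running color, so that it advances by one at every subsequent step until it reaches $k$. In either route the only delicate point --- and the one I expect to cost the most care --- is the off-by-one bookkeeping of the index ranges and of the first few (anti-)diagonals, precisely the places where the proof of Lemma \ref{lemNW} is itself slightly terse.
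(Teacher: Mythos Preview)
Your proposal is correct and in fact gives more than the paper does: the paper supplies no separate proof of Lemma~\ref{lemSE} at all, merely stating (just before Lemma~\ref{lemNE}) that ``an analogous lemma can be stated'' for the NE, SW and SE windows. Both of your routes---the $180^\circ$ rotation $\rho(i,j)=(m-1-i,n-1-j)$ reducing to Lemma~\ref{lemNW}, and the direct replay of the cascade with the orientation reversed---are faithful formalizations of exactly that analogy, and your closing caveat about index bookkeeping is apt given the visible off-by-one slips in the paper's own statement of the SE hypotheses.
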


Let $r_i=r(v_{i,1})=r(v_{i,2})=\ldots =r(v_{i,n-1})$, for $i=0,\ldots, m-1$, and $c_0=r(v_{0,0})=r(v_{1,0})=\ldots =r(v_{m-1,0})$.
\begin{theorem}
Given a coloring $r$ of the vertices of a $m\times n$ toroidal mesh $T$, let $S^k$ be constituted by the first row and column, i.e. $r_0=k$ and $c_0=k$; let $r_i=r_{m-i}$, $r_i>r_{i+1}$, and $r_{m-i}>r_{m-i-1}$ for $i=1,\ldots,\lceil m/2\rceil-2$;\begin{itemize} \item if $m$ is even: let  1) $r_{m/2-1}>r_{m/2}, r_{m/2+1}$, 2) $r_{m/2+1}> r_{m/2}$, 3) $r_{m/2-1}+r_{m/2}<2 r_{m/2+1}$; \item if $m$ is odd: let 1) $r_{\lceil m/2\rceil-1}>r_{\lceil m/2\rceil}$, 2) $k+r_{\lceil m/2\rceil}<2 r_{\lceil m/2\rceil-1}$;  \end{itemize} where $k-r_{\lceil m/2 \rceil-1}\geq \lceil m/2 \rceil-1$.
Then $S^k$ is a dynamo.
\end{theorem}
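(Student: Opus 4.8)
The plan is to run the \textbf{StubSM-Protocol} from $S^k$ (the whole row $0$ together with the whole column $0$) and show that every vertex eventually attains colour $k$. Under the protocol colours only increase and are bounded by $k$, so the process terminates; hence it suffices to prove that no vertex gets stuck with a colour $<k$, equivalently that the all-$k$ configuration is the terminal one.

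First I would record the invariants that make the picture tractable: (a) every vertex of row $0$ and of column $0$ keeps colour $k$ forever, since a $k$-coloured vertex has no neighbour of larger colour ($\infty$ does not occur here); (b) from $r_i>r_{i+1}$ and $r_i=r_{m-i}$ one checks, by induction on the step count, that the configuration stays invariant under the column reflection $v_{i,j}\mapsto v_{i,(n-j)\bmod n}$ and that the colours stay weakly decreasing along each column on the upper band $1\le i\le\lceil m/2\rceil-1$ and weakly increasing on the lower band. Given (a)--(b), initially only the four corner cells $v_{1,1},v_{1,n-1},v_{m-1,1},v_{m-1,n-1}$ have two $k$-coloured neighbours, so by Lemma \ref{lem2} a recolouring wave starts at these corners while the rest of the torus is frozen. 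The wave then propagates: once $v_{i,1}$ has become $k$, the cell $v_{i+1,1}$ has two $k$-neighbours (the one above it and the one in column $0$) and two distinct others ($v_{i+2,1}=r_{i+2}\neq r_{i+1}=v_{i+1,2}$), so by Corollary \ref{cor1} together with Lemmas \ref{lem1} and \ref{lem1bis} it climbs to $k$; and as column $1$ fills it unlocks column $2$, and so on. This is precisely the cascade described by Lemma \ref{lemNW}: the upper block of rows $0,\dots,\lceil m/2\rceil-1$ recolours to $k$ according to that lemma's recurrence, and via the column reflection the same argument (Lemma \ref{lemNE}) covers the rest of the upper block; symmetrically, with row $0$ reached through the wrap playing the role of the $k$-source, the lower block of rows $\lceil m/2\rceil,\dots,m-1$ is handled by Lemmas \ref{lemSW} and \ref{lemSE}. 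This shows every vertex off the middle rows becomes $k$.

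The main obstacle is the seam along the middle rows, where the downward and the upward waves meet and one must forbid the appearance of any $h$-block with $h<k$, which by Lemma \ref{l2} would be fatal. I would argue with the Remark: for a $2\times2$ window $W$ straddling the middle band, with extreme vertices $a_W$ of colour $r(a_W)$ in the lower row and $d_W$ of colour $r(d_W)$ in the upper row, no $h$-block can form inside $W$ while the number of recolourings of $a_W$ minus that of $d_W$ stays below $r(d_W)-r(a_W)$; the task is to show this holds until $d_W$ has reached $k$. For $m$ odd the rows in question are $\lceil m/2\rceil-1$ and $\lceil m/2\rceil$, so $r(d_W)-r(a_W)=r_{\lceil m/2\rceil-1}-r_{\lceil m/2\rceil}$, and the hypothesis $k+r_{\lceil m/2\rceil}<2r_{\lceil m/2\rceil-1}$, i.e. $k-r_{\lceil m/2\rceil-1}<r_{\lceil m/2\rceil-1}-r_{\lceil m/2\rceil}$, bounds the number of recolourings $d_W$ still needs by the available gap, while $k-r_{\lceil m/2\rceil-1}\ge\lceil m/2\rceil-1$ ensures $d_W$ has not yet reached $k$ when the upward wave arrives, so the timing works out. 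For $m$ even the three cells of the middle band carry colours $r_{m/2-1}>r_{m/2+1}>r_{m/2}$, and the same analysis on the two $2\times2$ windows of the band reduces to the hypotheses $r_{m/2+1}>r_{m/2}$ and $r_{m/2-1}+r_{m/2}<2r_{m/2+1}$. Once the seam is cleared, the middle-row cells are surrounded by $k$-coloured vertices and reach $k$ by Corollary \ref{cor1}; hence $S^k$ is a dynamo. The corner kick-off and the monotone propagation in the purely decreasing band are then routine bookkeeping with Lemmas \ref{lem1}, \ref{lem1bis}, \ref{lem2} and Corollary \ref{cor1}.
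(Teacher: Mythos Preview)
Your proposal is correct and follows essentially the same strategy as the paper: invoke Lemmas \ref{lemNW}--\ref{lemSE} to drive the four quadrants of the torus to colour $k$, and then handle the middle seam by a timing argument showing that the two waves meet without creating a forbidden $h$-block. The only cosmetic difference is that you package the seam analysis through the Remark on $2\times 2$ windows (bounding $i-j$ against $r(d_W)-r(a_W)$), whereas the paper spells out the same inequality directly by comparing, e.g., the colour of $v_{m/2,1}$ during recolouring with that of $v_{m/2+1,2}$ (even case) or $v_{\lceil m/2\rceil,1}$ with $v_{\lceil m/2\rceil-1,2}$ (odd case); in both write-ups conditions 2) and 3) (resp.\ 2) in the odd case) are exactly what make this comparison go through.
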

\begin{proof}
By Lemmas \ref{lemNW}-\ref{lemSE} all the vertices of $T^{NW}(\lceil m/2\rceil-1,\lceil n/2 \rceil)$,
$T^{NE}(\lceil m/2\rceil-1,\lceil n/2 \rceil-1)$, $T^{SW}(\lceil m/2\rceil+1,\lceil n/2 \rceil)$, $T^{SE}(\lceil m/2\rceil+1,\lceil n/2 \rceil-1)$ recolors by $k$
at the end of the recoloring process. Consider now rows $\lceil m/2\rceil-1$, $\lceil m/2\rceil$, $\lceil m/2\rceil+1$.  \begin{itemize} \item Let $m$ be even:  since  $r_{m/2-1}>r_{m/2}, r_{m/2+1}$ (condition 1), we have that $v_{m/2,1}$ ($v_{m/2,n-1}$) starts to recolor itself after that $v_{m/2-1,1}$ ($v_{m/2-1,n-1}$) recolored with $k$.
We are going to show that vertices of rows $m/2-1$ and $m/2+1$ recolor with $k$ by Corollary \ref{cor1}.2, because every vertex of the $m/2$th-row starts to change recolor only when its neighbors on the rows $m/2-1$ and $m/2+1$ are $k$-colored. We prove that the color assumed by $v_{m/2,1}$ during recoloring is different from the pattern of $v_{m/2+1,2}$. 
The number of recolorings of $v_{m/2,1}$ needed to obtain the same color as $r(v_{m/2+1,2})$ is $ r_{m/2+1}- r_{m/2}$. Since $r(v_{m/2+1,2})>r(v_{m/2,1})$ (condition 2)
and $v_{m/2+1,2}$ starts to recolor after that $v_{m/2+1,1}$ recolored with $k$, if this happen before $ r_{m/2+1}- r_{m/2}$ steps, the colors of $v_{m/2,1}$ and $v_{m/2+1,2}$ at each time step are different. We have that $k-(r_{m/2+1}+k-r_{m/2-1})=r_{m/2-1}-r_{m/2+1}$ which is less than $ r_{m/2+1}- r_{m/2}$ by condition 3. Therefore, in a first phase $v_{m/2+1,1}$ recolors by $k$, then $v_{m/2,2}$ starts to recolor after that $v_{m/2+1,2}$ recolored with $k$ and in a third moment, by Lemma \ref{lem1}, $v_{m/2,2}$ will become $k$-colored. 
Finally this can be proved for all the vertices on the $m/2$th-row, in order that no $h$-block could arise with $h\neq k$, and at the end of the process all the vertices will be $k$-colored.
\item Let $m$ be odd:  $v_{\lceil m/2\rceil-1,1}$ and $v_{\lceil m/2\rceil,1}$ start to recolor at the same time. Although  $r_{\lceil m/2\rceil-1}>r_{\lceil m/2\rceil}$ (condition 1), we are going to show that the color of
 $v_{\lceil m/2\rceil,1}$ assumed during recolorings is different from the color of  $v_{\lceil m/2\rceil-1,2}$. Indeed
$r_{\lceil m/2\rceil}+k-r_{\lceil m/2\rceil-1}<r_{\lceil m/2\rceil-1}$ by condition 2. Therefore every vertex of row ${\lceil m/2\rceil}$ starts to recolor when it has two neighbors of color $k$, and remaining neighbors of different colors. By Lemma \ref{lem1} every vertex will recolor by $k$, and no $h$-block will appear with $h\neq k$.

\end{itemize}  \end{proof}

\begin{figure}[h!]
\begin{center}
6 6 6 6 6    \hspace{.2in}   6 6 6 6 6\\
6 5 5 5 5    \hspace{.2in}   6 5 5 5 5\\
6 4 4 4 4    \hspace{.2in}   6 4 4 4 4\\
6 1 1 1 1    \hspace{.2in}   6 1 1 1 1\\
6 3 3 3 3    \hspace{.2in}   6 5 5 5 5\\
6 5 5 5 5    \hspace{.82in}
 \caption{An example of a dynamo for $m$ even and for $m$ odd.}
 \label{figF}
\end{center}
\end{figure}

\section{Conclusions}

In this paper we introduced the {\em multicolored dynamos}, a new problem that is an extension of the original {\em target set selection} (TSS) problem.
In our settings the set of the nodes' states is not limited to white or black, such as for the dynamic monopolies, but vertices can assume values from a finite and ordered set.
This protocol finds application in contexts where the collective decisions can be influenced by malicious behaviors, e.g. partial copies of corrupted data or faulty sensors, and a slow convergence process (due to an implicit trust strategy implemented in the protocol) would be desirable. In this work we characterized the nodes' coloring patterns in terms of neighbors' influence, that is a function of the nodes' degrees.

At the end of this work, there are some case studies and several interesting questions that still remain open. For instance, other kinds of topologies could be considered under the
\textbf{SMP-Protocol} such as scale-free networks or random graphs to have a comparitive analysis with respect to other algorithmic models of social influence and viral marketing on social network, e.g. the bounded confidence model \cite{amblard01}.
Furthermore, considering the growing attention to the dynamic aspects of social networks such a protocol should be studied on graphs where the availability of links and nodes is subject to change during time \cite{CFQS2010a}. 
This statement leads to a different definition of majority and the deriving propagation patterns should be investigated and characterized. Moreover, instead of studying initial configurations leading to a monochromatic configuration, the problem of determining initial configurations avoiding the converge of the whole system toward a monochromatic fixed point could be investigated.

\bibliographystyle{plain}
\bibliography{biblio}

\begin{thebibliography}{10}

\bibitem{Bermond98}
J.-C. Bermond, L.~Gargano, A.A. Rescigno, and U.~Vaccaro.
\newblock Fast gossiping by short messages.
\newblock {\em ACM Computing Surveys}, 27(4):917--941, 1998.

\bibitem{BermondBPP96}
Jean-Claude Bermond, Johny Bond, David Peleg, and Stephane Perennes.
\newblock Tight bounds on the size of 2-monopolies.
\newblock In {\em SIROCCO}, pages 170--179, 1996.

\bibitem{BermondBPP03}
Jean-Claude Bermond, Johny Bond, David Peleg, and Stephane Perennes.
\newblock The power of small coalitions in graphs.
\newblock {\em Discrete Applied Mathematics}, 127(3):399--414, 2003.

\bibitem{Burman09}
Janna Burman, Ted Herman, Shay Kutten, and Boaz Patt-Shamir.
\newblock Asynchronous and fully self-stabilizing time-adaptive majority
  consensus.
\newblock In {\em OPODIS}, 2005.

\bibitem{Carvaja07}
Rodolfo Carvajal, Martín Matamala, Ivan Rapaport, and Nicolas Schabanel.
\newblock Small alliances in graphs.
\newblock In {\em Proceedings of MFCS 2007, LNCS 4708}, 2007.

\bibitem{CFQS2010a}
Arnaud Casteigts, Paola Flocchini, Walter Quattrociocchi, and Nicola Santoro.
\newblock {Time-Varying Graphs and Dynamic Networks}.
\newblock Nov 2010.

\bibitem{Castellano2007}
Claudio Castellano, Santo Fortunato, and Vittorio Loreto.
\newblock Statistical physics of social dynamics.
\newblock {\em Reviews of Modern Physics}, 81(2):591+, June 2009.

\bibitem{Ching09b}
Ching-Lueh Chang and Yuh-Dauh Lyuu.
\newblock On irreversible dynamic monopolies in general graphs.
\newblock {\em CoRR}, abs/0904.2306, 2009.

\bibitem{Chang09a}
Ching-Lueh Chang and Yuh-Dauh Lyuu.
\newblock Spreading messages.
\newblock {\em Theor. Comput. Sci.}, 410(27-29):2714--2724, 2009.

\bibitem{ChoudharyR09}
Ashish Choudhary, Arpita Patra, B.~V. Ashwinkumar, Kannan Srinathan, and
  C.~Pandu Rangan.
\newblock On minimal connectivity requirement for secure message transmission
  in asynchronous networks.
\newblock In {\em ICDCN}, pages 148--162, 2009.

\bibitem{amblard01}
G.~Deffuant, D.~Neau, F.~Amblard, and Gerard Weisbuch.
\newblock Mixing beliefs among interacting agents.
\newblock {\em Advances in Complex Systems}, 3:87--98, 2001.

\bibitem{Domingos2001}
Pedro Domingos and Matt Richardson.
\newblock {Mining the network value of customers}.
\newblock In {\em KDD '01: Proceedings of the seventh ACM SIGKDD international
  conference on Knowledge discovery and data mining}, pages 57--66, New York,
  NY, USA, 2001. ACM.

\bibitem{Santoro03}
Paola Flocchini, Rastislav Kr\'{a}lovi\v{c}, Peter Ru\v{z}i\v{c}ka, Alessandro
  Roncato, and Nicola Santoro.
\newblock On time versus size for monotone dynamic monopolies in regular
  topologies.
\newblock {\em J. of Discrete Algorithms}, 1(2):129--150, 2003.

\bibitem{Lodi98}
Paola Flocchini, Elena Lodi, Fabrizio Luccio, Linda Pagli, and Nicola Santoro.
\newblock Dynamic monopolies in tori.
\newblock {\em Discrete Appl. Math.}, 137(2):197--212, 2004.

\bibitem{Garcia85}
Hector Garcia-molina and Daniel Barbara.
\newblock How to assign votes in a distributed system.
\newblock {\em Journal of the ACM}, 32:841--860, 1985.

\bibitem{Granovetter85}
M.~Granovetter.
\newblock Economic action and social structure: the problem of embeddedness.
\newblock {\em American Journal of Sociology}, 91, 1985.

\bibitem{JosangQ09}
Audun J{\o}sang and Walter Quattrociocchi.
\newblock Advanced features in bayesian reputation systems.
\newblock In {\em TrustBus}, pages 105--114, 2009.

\bibitem{Kats2005}
Elihu Katz and Paul Lazarsfeld.
\newblock {\em {Personal Influence: The Part Played by People in the Flow of
  Mass Communications}}.
\newblock {Transaction Publishers}, October 2005.

\bibitem{Kempe03}
David Kempe, Jon Kleinberg, and \'{E}va Tardos.
\newblock {Maximizing the spread of influence through a social network}.
\newblock In {\em KDD '03: Proceedings of the ninth ACM SIGKDD international
  conference on Knowledge discovery and data mining}, pages 137--146, New York,
  NY, USA, 2003. ACM.

\bibitem{Konig2009}
Stefan K\"{o}nig, Tina Balke, Walter Quattrociocchi, Mario Paolucci, and
  Torsten Eymann.
\newblock {On the Effects of Reputation in the Internet of Services}.
\newblock In Mario Paolucci, editor, {\em Reputation - Theory and Technology},
  Rome, 2009.

\bibitem{Kossmann}
Donald Kossmann.
\newblock The state of the art in distributed query processing.
\newblock {\em ACM Computing Surveys}, 32:2000, 2000.

\bibitem{MustafaP01}
Nabil~H. Mustafa and Aleksandar Pekec.
\newblock Majority consensus and the local majority rule.
\newblock In {\em ICALP}, pages 530--542, 2001.

\bibitem{Mustafa04}
Nabil~H. Mustafa and Aleksandar Pekec.
\newblock Listen to your neighbors: How (not) to reach a consensus.
\newblock {\em SIAM J. Discrete Math.}, 17(4):634--660, 2004.

\bibitem{NayakPS92}
Amiya Nayak, Linda Pagli, and Nicola Santoro.
\newblock Efficient construction of catastrophic patterns for vlsi
  reconfigurable arrays with bidirectional links.
\newblock In {\em ICCI}, pages 79--83, 1992.

\bibitem{PelegSurvey}
David Peleg.
\newblock Local majority voting, small coalitions and controlling monopolies in
  graphs: A review.
\newblock {\em Theor. Comput. Sci.}, 282:231--256, 2002.

\bibitem{Peleg96b}
David Peleg and Avishai Wool.
\newblock How to be an efficient snoop, or the probe complexity of quorum
  systems.
\newblock In {\em SIAM Journal on Discrete Mathematics}, pages 290--299. ACM,
  1996.

\bibitem{Sabater2007}
Jordi Sabater-Mir and Mario Paolucci.
\newblock On representation and aggregation of social evaluations in
  computational trust and reputation models.
\newblock {\em Int. J. Approx. Reasoning}, 46(3):458--483, 2007.

\bibitem{schillo99}
M.~Schillo, P.~Funk, and M.~Rovatsos.
\newblock Who can you trust: Dealing with deception.
\newblock In {\em Proceedings of the Second Workshop on Deception, Fraud and
  Trust in Agent Societies, Seattle, USA}, pages 95--106, 1999.

\bibitem{Shao09}
Zhen Shao and Haijun Zhou.
\newblock Dynamics-driven evolution to structural heterogeneity in complex
  networks.
\newblock {\em Physica A: Statistical Mechanics and its Applications},
  388(4):523 -- 528, 2009.

\bibitem{Thomas79}
Robert~H. Thomas and Bolt Beranek.
\newblock A majority consensus approach to concurrency control for multiple
  copy databases.
\newblock {\em ACM Transactions on Database Systems}, 4:180--209, 1979.

\bibitem{quattrociocchi2010d}
W.Quattrociocchi, R.Conte, and E.Lodi.
\newblock Simuating opinion dynamics in heterogeneous communication
  environments.
\newblock {\em European Conference on Complex Systems,Lisboa,Portugal}, 2010.

\end{thebibliography}

\end{document}